\begin{document}
\clearpage
\title{Learning in Games: Robustness of Fast Convergence}

\date{}

\author{%
	Dylan J. Foster \thanks{Cornell University, \texttt{djfoster@cs.cornell.edu}. Work supported under NSF grant CDS\&E-MSS 1521544 and an NDSEG fellowship.}
	\and
    Zhiyuan Li\thanks{Tsinghua University, \texttt{lizhiyuan13@mails.tsinghua.edu.cn}. Research performed while author was visiting Cornell University.}
	\and 
	Thodoris Lykouris \thanks{Cornell University, \texttt{teddlyk@cs.cornell.edu}. Work supported under ONR grant N00014-08-1-0031, and a Google faculty research award.}
	\and
	Karthik Sridharan
	\thanks{Cornell University, \texttt{sridharan@cs.cornell.edu}. Work supported in part by NSF grant CDS\&E-MSS 1521544.}
	\and
	\'{E}va Tardos\thanks{Cornell University, \texttt{eva@cs.cornell.edu}. Work supported in part by NSF grant CCF-1563714, ONR grant N00014-08-1-0031, and a Google faculty research award.}
}
\maketitle
\thispagestyle{empty}


\begin{abstract}
We show that learning algorithms satisfying a \emph{low approximate regret} property experience fast convergence to approximate optimality in a large class of repeated games. Our property, which simply requires that each learner has small regret compared to a $(1+\epsilon)$-multiplicative approximation to the best action in hindsight, is ubiquitous among learning algorithms; it is satisfied even by the vanilla Hedge forecaster. Our results improve upon recent work of Syrgkanis et al. \cite{SyrgkanisALS15} in a number of ways.
We require only that players observe payoffs under other players' realized actions, as opposed to expected payoffs. We further show that convergence occurs with high probability, and 
show convergence under bandit feedback. 
Finally, we improve upon the speed of convergence by a factor of $n$, the number of players. Both the scope of settings and the class of algorithms for which our analysis provides fast convergence are considerably broader than in previous work.

Our framework 
applies to dynamic population games via a low approximate regret property for shifting experts. Here we strengthen
the results of Lykouris et al. \cite{LykourisST16} in two ways: We allow players to select learning algorithms from a larger class, which includes a minor variant of the basic Hedge algorithm, and we increase the maximum churn in players for which approximate optimality is achieved.

In the bandit setting we present a new algorithm which provides a ``small loss''-type bound with improved dependence on the number of actions in utility settings, and is both simple and efficient. This result may be of independent interest. 
\end{abstract}

\newpage
\setcounter{page}{1}
\section{Introduction}

Consider players repeatedly playing a game, all acting independently to minimize their cost or maximize their utility. It is natural in this setting for each player to use a learning algorithm that guarantees small regret to decide on their strategy, as the environment is constantly changing due to each player's choice of strategy.
It is well known that such \emph{decentralized no-regret dynamics} are guaranteed to converge to a form of equilibrium for the game. 
Furthermore, in a large class of games known as \emph{smooth games} \cite{Roughgarden15}
they converge to outcomes with approximately optimal social welfare matching the worst-case efficiency loss of Nash equilibria (the \emph{price of anarchy}). In smooth cost minimization games the overall cost is $\lambda/(1-\mu)$ times the minimum cost, while in smooth mechanisms \cite{SyrgkanisT_STOC13} such as auctions it is $\lambda/\max(1,\mu)$ times the maximum total utility (where  $\lambda$ and $\mu$ are parameters of the smoothness condition). Examples of smooth games and mechanisms include routing games and many forms of auction games (see e.g. \cite{Roughgarden15,SyrgkanisT_STOC13,AuctionSurvey2016}).

The speed at which the game outcome converges to this approximately optimal welfare is governed by individual players' regret bounds. There are a large number of simple regret minimization algorithms (Hedge/Multiplicative Weights, Mirror Decent, Follow the Regularized Leader; see e.g. \cite{Hazan_book}) that guarantee that the average regret goes down as $O(1/\sqrt{T})$ with time $T$, which is tight in adversarial settings.

Taking advantage of the fact that playing a game against opponents who themselves are also using regret minimization is not a truly adversarial setting, a sequence of papers \cite{Daskalakis2015Near, Rakhlin2013Optimization, SyrgkanisALS15} showed that by using specific learning algorithms, the dependence on $T$ of the convergence rate can be improved to $O(1/T)$ (``fast convergence''). Concretely,  Syrgkanis et al. \cite{SyrgkanisALS15} show that all algorithms satisfying the so-called RVU property (Regret by Variation in Utilities), which include Optimistic Mirror Descent \cite{Rakhlin2013Optimization}, converge at a $O(1/T)$ rate 
with a fixed number of players. 

One issue with the works of \cite{Daskalakis2015Near, Rakhlin2013Optimization, SyrgkanisALS15} 
is that they use expected cost as their feedback model for the players. In each round every player receives the expected cost for each of their available actions, in expectation over the current action distributions of \emph{ all other players}.  This clearly represents more information than is realistically available to players in games --- at most each player sees the cost of each of their actions given the actions taken by the other players (\emph{realized feedback}). In fact, even if each player had access to the action distributions of the other players, simply computing this expectation is generally intractable when $n$, the number of players, is large. 

We improve the result of \cite{SyrgkanisALS15} on the convergence to approximate optimality
in smooth games in a number of different aspects. To achieve this, we relax the quality of approximation from the bound guaranteed by smoothness. Typical smoothness bounds on the price of anarchy in auctions are small constants, such a factor of 1.58 or 2 in item auctions. 
Increasing the approximation factor by an arbitrarily small constant $\eps > 0$ enables the following results: \vspace{-0.07in}
\begin{itemize}
\setlength{\itemsep}{0pt}\setlength{\parsep}{0pt}\setlength{\parskip}{0pt}
\item We show that learning algorithms obtaining
fast convergence
are ubiquitous.
\item We improve the speed of convergence by a factor of $n$, the number of players. 
\item For all
our results, players only need feedback based 
on realized --- not expected --- outcomes.
\item We show that convergence occurs with high probability in most settings.
\item We extend the results to show that it is enough for the players to observe realized \emph{bandit} feedback, only seeing the outcome of the action they play.
\item Our results apply to settings where the set of players in the game changes over time \cite{LykourisST16}. We strengthen previous results by showing that a broader class of algorithms achieve approximate efficiency under significant churn. \vspace{-0.07in}
\end{itemize}

We achieve these results 
using a property we term 
Low Approximate Regret, which simply states 
that an online learning algorithm achieves good regret against a multiplicative approximation of the best action in hindsight. This property is satisfied by many known algorithms including even the vanilla Hedge algorithm, as well as Optimistic Hedge \cite{RakhlinS13predictablesequences,SyrgkanisALS15} (via a new analysis). The crux of our analysis technique is the simple observation that for many types of data-dependent regret bounds we can fold part of the regret bound into the comparator term, allowing us to explore the trade-off between additive and multiplicative approximation. 

In Section \ref{sec:static}, we show that Low Approximate Regret implies fast convergence to the social welfare guaranteed by the price of anarchy via the smoothness property. This convergence only requires feedback from the realized
actions played by other players, not their action distribution or the expectation over their actions. We further show that this convergence occurs with high probability in most settings. For games with a large number of players we also improve the speed of convergence. \cite{SyrgkanisALS15} shows that players using Optimistic Hedge in a repeated game with $n$ players converge to the approximately optimal outcome guaranteed by smoothness at a rate of $O(n^2/T)$. They also offer an analysis guaranteeing $O(n/T)$ speed of convergence, at the expense of a constant factor decrease in the quality of approximation (e.g., a factor of 4 in atomic congestion games with affine congestion). We achieve the convergence bound of $O(n/T)$ with only an arbitrarily small loss in the approximation.

Algorithms that satisfy the Low Approximate Regret property are ubiquitous and include simple, efficient algorithms such as Hedge and variants.
The observation that this broad class of algorithms enjoys fast convergence in realistic settings suggests that fast convergence occurs in practice.

Comparing our work to \cite{SyrgkanisALS15} with regard to feedback,
Low Approximate Regret algorithms require only realized feedback, while the 
analysis of the RVU property in  \cite{SyrgkanisALS15} requires expected feedback.
To see the contrast, consider the load balancing game introduced in \cite{Koutsoupias2009Worst} with two players and two bins, where each player selects a bin and observes cost given by the number of players in that bin. Initialized at the uniform distribution, any learning algorithm with expectation feedback (e.g. those in \cite{SyrgkanisALS15}) will stay at the uniform distribution forever, because the expected cost vector distributes cost equally across the two bins. This gives low regret under expected costs, but suppose we were interested in realized costs: The only ``black box'' way to lift \cite{SyrgkanisALS15} to this case would be to simply evaluate the regret bound above under realized costs, but here players will experience $\Theta(1/\sqrt{T})$ variation because they select bins uniformly at random, ruining the fast convergence. Our analysis sidesteps this issue because players achieve Low Approximate Regret with high probability.

In Section \ref{sec:bandit} we consider games where players can only observe the cost of the action they played  given the actions taken by the other players, and receive no feedback for actions not played (\emph{bandit feedback}). \cite{Rakhlin2013Optimization} analyzed zero-sum games with bandit feedback, but assumed that players receive expected cost over the strategies of all other players. In contrast, the Low Approximate Regret property can be satisfied by just observing realizations, even with bandit feedback. We propose a new bandit algorithm based on log-barrier regularization with importance sampling that guarantees fast convergence of $O(d\log T/\eps)$  where $d$ is the number of actions.  Known techniques would either result in a convergence rate of $O(d^3\log T)$ (e.g. adaptations of SCRiBLe \cite{RakhlinS13predictablesequences}) or would not extend to utility maximization settings (e.g. GREEN \cite{Allenberg2006}). Our technique is of independent interest since it improves the dependence of approximate regret bounds on the number of experts while applying to both cost minimization and utility maximization settings.

Finally, in Section \ref{sec:dynamic}, we consider the \emph{dynamic population game} setting of \cite{LykourisST16}, where players enter and leave the game over time. 
\cite{LykourisST16} showed that regret bounds for shifting experts directly influence the rate at which players can turn over and still guarantee close to optimal solutions on average. We show that a number of learning algorithms have the Low Approximate Regret property in the shifting experts setting, allowing us to extend the fast convergence result to dynamic games. Such learning algorithms include a noisy version of Hedge as well as AdaNormalHedge \cite{Luo2015}, which was previously studied in the dynamic setting in \cite{LykourisST16}.  
Low Approximate Regret allows us to increase the turnover rate from the
one in \cite{LykourisST16}, while also widening and simplifying the class of learning algorithms that players can use to guarantee the close to optimal average welfare. 


\vspace{-0.1in} \section{Repeated Games and Learning Dynamics}\label{sec:model} \vspace{-0.1in}

We consider a 
game $G$ among a set of $n$ players. Each player $i$ has an action space $S_i$ and a cost function $\cost_i: S_1\times\dots\times S_n\rightarrow [0,1]$ that maps an action profile $s=(s_1,\dots,s_n)$ to the cost $\cost_i(s)$ that player experiences\footnote{See Appendix \ref{app:utility} for analogous definitions for utility maximization games.}.
We assume that the action space of each player has cardinality $d$, i.e. $|S_i|=d$. We let $w=(w_1,\dots,w_n)$ denote a list of probability distributions over all players' actions, where $w_i\in \Delta(S_i)$ and $w_{i,x}$ is the probability of action $x\in S_i$.

The game is repeated for $T$ rounds. At each round $t$ each player $i$ picks a probability distribution $w_i^t\in \Delta(S_i)$ over actions and draws their action $s_i^t$ from this distribution. Depending on the game playing environment under consideration, players will receive different types of feedback after each round.
In Sections \ref{sec:static} and \ref{sec:dynamic} we consider
feedback where at the end of the round each player $i$ observes the utility they would have received had they played any possible action $x\in S_i$ given the actions taken by the other players. More formally let $c_{i,x}^t= \cost_i(x,s_{-i}^t)$, where $s_{-i}^t$ is the set of strategies of all but the $i^{th}$ player at round $t$, and let $c_i^t=(c_{i,x}^t)_{x\in S_i}$.  Note that the expected cost of  player $i$ at round $t$ (conditioned on the other players' actions) is simply the inner product $\tri{w_i^t,c_i^t}$.

We refer to this form of feedback as \textit{realized feedback} since it only depends on the realized actions $s_{-i}^{t}$ sampled by the opponents; it does not directly depend on their distributions $w_{-i}^{t}$. This should be contrasted with the \textit{expectation feedback} used by  \cite{SyrgkanisALS15, Daskalakis2015Near, Rakhlin2013Optimization}, where player $i$ observes $\En_{s_{-i}^{t}\sim{}w_{-i}^{t}}\brk{\cost_i(x, s_{-i}^t)}$ for each $x$. 

Sections \ref{sec:bandit} and \ref{sec:dynamic} consider extensions of our repeated game model. In Section \ref{sec:bandit} we examine partial information (``bandit'') feedback, where players observe only the cost of their own realized actions. In Section \ref{sec:dynamic} we consider a setting where the player set is evolving over time. Here we use the dynamic population model of \cite{LykourisST16}, where at each round $t$ each player $i$ is replaced (``turns over'') with some probability $p$. The new player has cost function $\cost_i^t(\cdot)$ and action space $S_i^t$ which may change arbitrarily subject to certain constraints. We will formalize this notion later on.

\paragraph{Learning Dynamics} \label{sec:low-approx-regret}
We assume that players select their actions using learning algorithms satisfying a property we call \emph{Low Approximate Regret}, which simply requires that the cumulative cost of the learner multiplicatively approximates the cost of the best action they could have chosen in hindsight. We will see in subsequent sections that this property is ubiquitous and leads to fast convergence in a robust range of settings.

\begin{definition}{(Low Approximate Regret)}
\label{def:lar}
A learning algorithm for player $i$ satisfies the \emph{Low Approximate Regret} property for parameter $\epsilon>0$ and function $A(d,T)$ if for all action distributions $f\in\Delta(S_i)$,
{\small
\begin{equation}
\label{eq:lar}
(1-\epsilon)\sum_{t=1}^T \tri{w_i^t,c_i^t}\leq \sum_{t=1}^T \tri{f,c_i^t}+\frac{A(d,T)}{\epsilon}.
\end{equation}}A learning algorithm satisfies Low Approximate Regret against shifting experts if for all sequences  $f^1,\dots, f^T\in\Delta(S_i)$, letting $K=|\{i>2: f^{t-1}\neq f^t \}|$ be the number of shifts,
{\small
\begin{equation}
\label{eq:lar_shifting}
(1-\epsilon)\sum_{t=1}^T \tri{w_i^t,c_i^t}\leq \sum_{t=1}^T\tri{f^t,c_i^t}+(1+K)\frac{A(d,T)}{\epsilon}.
\end{equation}
}In the bandit feedback setting, we require \eqref{eq:lar} to hold in expectation over the realized strategies of player $i$ for any $f\in\Delta(S_i)$ fixed before the game begins.
\end{definition}

We use the version of the Low Approximate Regret property with shifting experts when considering players in dynamic population games in Section \ref{sec:dynamic}. In this case, the game environment is constantly changing due to churn in the population, and we need the players to have low approximate regret with shifting experts to guarantee high social welfare despite the churn.

We emphasize that all algorithms we are aware of that satisfy Low Approximate Regret can be made to do so for any fixed choice of the approximation factor $\eps$ via an appropriate selection of parameters. Many algorithms have an even stronger property: They satisfy \eqref{eq:lar} or \eqref{eq:lar_shifting} \emph{for all $\eps>0$ simultaneously}. We say that such algorithms satisfy the \emph{Strong Low Approximate Regret} property. This property has favorable consequences in the context of repeated games.

The Low Approximate Regret property differs from previous properties such as RVU in that it only requires that the learner's cost be close to a multiplicative approximation to the cost of the best action in hindsight. Subsequently, it is always smaller than the regret. For instance, if we consider only uniform (i.e. not data-dependent) regret bounds the Hedge algorithm can only achieve $O\prn{\sqrt{T\log{}d}}$ exact regret, but can achieve Low Approximate Regret with parameters $\eps$ and $A(d,T)=O(\log{}d)$ for any $\eps>0$. Low Approximate Regret is analogous to the notion of $\alpha$-regret from \cite{KakadeSTOC07}, with $\alpha=(1+\epsilon)$.

In Appendix \ref{app:utility} we show that the Low Approximate Regret property and our subsequent results naturally extend to utility maximization games.


\paragraph{Smooth Games} 
\label{sec:approx-efficiency}
It is well-known that in a large class of games, termed \emph{smooth games} by Roughgarden \cite{Roughgarden15}, traditional learning dynamics 
converge to approximately optimal social welfare. In subsequent sections we analyze the convergence of Low Approximate Regret learning dynamics in such smooth games. 
We will see that Low Approximate Regret (for sufficiently small $A(d,T)$) coupled with smoothness of the game implies fast convergence of  learning dynamics to desirable social welfare under a variety of conditions. Before proving this result we review social welfare and smooth games.

For a given action profile $s$, the social cost is $C(s)=\sum_{i=1}^n \cost_i(s)$. To bound the efficiency loss due to the selfish behavior of the players we define
{\small
\[
\opt=\min_{s^o}\sum_{i=1}^n \cost_i(s^o).
\]
}
\begin{definition}{(Smooth game \cite{Roughgarden15})}
\label{def:smoothness}
A cost minimization game is called $(\lambda,\mu)$-smooth if for all strategy profiles $s$ and $s^*$: $\sum_i \cost_i(s_i^*,s_{-i})\leq \lambda\cdot\cost_i(s^*)+\mu\cdot\cost_i(s)$.
\end{definition}
This property is typically applied using a (close to) optimal action profile $s^*=s^o$. For this case the property implies that if $s$ is an action profile with very high cost, then some player deviating to her share of the optimal profile $s_i^*$ will improve her cost. 

For smooth games, the price of anarchy is at most $\lambda/(1-\mu)$, meaning that Nash equilibria of the game, as well as no-regret learning outcomes in the limit, have social cost at most a factor of $\lambda/(1-\mu)$ above the optimum. Smooth cost minimization games include congestion games such as routing or load balancing. For example, atomic congestion games with affine cost functions are $(\frac{5}{3},\frac{1}{3})$-smooth \cite{Christodulou}, non-atomic games are $(1,0.25)$ smooth \cite{RoughgardenT2002}, implying a price of anarchy of 2.5 and 1.33 respectively.  While we focus on cost-minimization games for simplicity of exposition, an analogous definition also applies for utility maximization, including smooth mechanisms \cite{SyrgkanisT_STOC13}, which we elaborate on in Appendix \ref{app:utility}. Smooth mechanisms include most simple auctions. For example, the first price item auction is $(1-1/e,1)$-smooth and all-pay actions are $(1/2,1)$-smooth, implying a price of anarchy of 1.58 and 2 respectively. All of our results extend to such mechanisms.

\section{Learning in Games with Full Information Feedback}\label{sec:static}


We now analyze the efficiency of algorithms with the Low Approximate Regret property in the full information setting. Our first proposition shows that, for smooth games with full information feedback, learners with the Low Approximate Regret property converge to efficient outcomes.

\begin{proposition}
\label{prop:poa_full_info}
In any $(\lambda,\mu)$-smooth game, if all players use Low Approximate Regret algorithms satisfying Eq. (\ref{eq:lar}) with parameters $\epsilon$ and $A(d,T)$, then for the action profiles $s^t$  drawn on round $t$ from the corresponding mixed actions of the players,
{\small
\[
\frac{1}{T}\sum_t\En\brk*{C(s^t)}\leq \frac{\lambda}{1-\mu-\epsilon}\opt+\frac{n}{T}\cdot\frac{1}{1-\mu-\epsilon}\cdot \frac{A(d,T)}{\epsilon}.
\]
}
\end{proposition}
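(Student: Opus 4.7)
The plan is to combine the Low Approximate Regret inequality, applied at each player with comparator supported on that player's share of the optimal action profile, with the smoothness property and a rearrangement.

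First, fix an optimal action profile $s^o = (s_1^o, \dots, s_n^o)$ with $C(s^o) = \opt$. For each player $i$, I would instantiate the Low Approximate Regret bound (\ref{eq:lar}) with the comparator $f$ equal to the point mass on $s_i^o$. This gives, deterministically in the realized opponent actions,
\[
(1-\epsilon)\sum_{t=1}^T \tri{w_i^t, c_i^t} \leq \sum_{t=1}^T c_{i, s_i^o}^t + \frac{A(d,T)}{\epsilon}.
\]
Since $\tri{w_i^t, c_i^t}$ equals the conditional expectation of $\cost_i(s^t)$ given $s_{-i}^t$ (and the history up through round $t$), and $c_{i, s_i^o}^t = \cost_i(s_i^o, s_{-i}^t)$, taking full expectations and summing the inequality over all $n$ players yields
\[
(1-\epsilon)\sum_{t=1}^T \En\brk*{\sum_{i=1}^n \cost_i(s^t)} \leq \sum_{t=1}^T \En\brk*{\sum_{i=1}^n \cost_i(s_i^o, s_{-i}^t)} + n\cdot \frac{A(d,T)}{\epsilon}.
\]

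Next I would apply the $(\lambda,\mu)$-smoothness property pointwise to each realized $s^t$ with $s^* = s^o$, which gives $\sum_i \cost_i(s_i^o, s_{-i}^t) \leq \lambda\,\opt + \mu\, C(s^t)$. Substituting this bound into the right-hand side above and recognizing $\sum_i \cost_i(s^t) = C(s^t)$ on the left, I obtain
\[
(1-\epsilon)\sum_{t=1}^T \En\brk*{C(s^t)} \leq \lambda T\cdot \opt + \mu\sum_{t=1}^T \En\brk*{C(s^t)} + n\cdot\frac{A(d,T)}{\epsilon}.
\]
Moving the $\mu$-term to the left collects $(1 - \mu - \epsilon)\sum_t \En\brk{C(s^t)}$ on the left, and dividing through by $T(1-\mu-\epsilon)$ gives exactly the claimed bound.

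There is no real obstacle here; the proof is essentially a bookkeeping exercise once one chooses the comparator correctly. The only subtlety worth being explicit about is that the Low Approximate Regret inequality holds path-by-path against a fixed comparator, so the smoothness inequality must be applied inside the expectation rather than being combined with an expectation-over-opponents version of the regret bound. This is also exactly why realized feedback (rather than expected feedback) suffices: the comparator $s_i^o$ is a pure action, not a distribution, and the Low Approximate Regret property compares the learner's realized cost stream to this fixed comparator directly.
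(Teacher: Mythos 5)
Your proof is correct and follows exactly the argument the paper sketches: instantiate the Low Approximate Regret bound with the point-mass comparator $s_i^o$ for each player, sum over players, apply $(\lambda,\mu)$-smoothness pointwise with $s^*=s^o$, and rearrange. The remark about why realized feedback suffices is a nice touch but not needed for the derivation itself.
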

\begin{proof}
This proof is a straightforward modification of the usual price of anarchy proof for smooth games. We obtain the claimed bound by writing
$\sum_t\En\brk*{C(s^t)}=\sum_i \sum_t\En\brk*{\cost_i(s^t)}$, using the Low Approximate Regret property with $f=s^*_i$ for each player $i$ for the optimal solution $s^*$, then using the smoothness property for each time $t$ to bound  $\sum_i \cost_i(s_i^*,s_{-i}^t)$, and finally rearranging terms.
\end{proof}

For $\epsilon<< (1-\mu)$ the approximation factor of $\lambda/(1-\mu-\epsilon)$ is very close to the price of anarchy $\lambda/(1-\mu)$. This shows that Low Approximate Regret learning dynamics quickly
converge to 
outcomes with 
social welfare 
arbitrarily close to 
the welfare guaranteed for exact Nash equilibria by the price of anarchy. A simple corollary of this proposition is that, when players use learning algorithms that satisfy the Strong Low Approximate Regret property, the bound above can be taken to depend on  $\opt$ even though this value is unknown to the players.

Whenever the Low Approximate Regret property is satisfied, a high probability version of the property with similar dependence on $\eps$ and $A(d,T)$ is also satisfied. This implies that in addition to quickly converging to efficient outcomes in expectation, Low Approximate Regret learners experience fast convergence with high probability.

\begin{proposition}
\label{prop:poa_full_info-hp}
In any $(\lambda,\mu)$-smooth game, if all players use Low Approximate Regret algorithms satisfying Eq. (\ref{eq:lar}) for parameters $\epsilon$ and $A(d,T)$, then for the action profile $s^t$  drawn on round $t$ from the players' mixed actions and $\gamma=2\eps/(1+\eps)$, we have that
$\forall \delta >0$, with probability at least $1 - \delta$,
{\small
$$
\frac{1}{T}\sum_t C(s^t)\leq \frac{\lambda}{1-\mu-\gamma}\opt+\frac{n}{T}\cdot\frac{1}{1-\mu-\gamma}\cdot \brk*{\frac{4A(d,T)}{\gamma}+\frac{12\log(n\log_2 (T)/\delta))}{\gamma}},
$$
}
\end{proposition}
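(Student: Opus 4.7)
The strategy is to lift the in-expectation bound of Proposition \ref{prop:poa_full_info} to a high-probability statement by concentrating each player's realized total cost $\sum_t c_{i,s_i^t}^t$ around its conditional expectation $\sum_t \tri{w_i^t,c_i^t}$ via a ``small-loss'' Freedman-style martingale bound, then feeding the result into the same smoothness/Low Approximate Regret calculation as before. For each player $i$, the differences $M_i^t := c_{i,s_i^t}^t - \tri{w_i^t,c_i^t}$ form a martingale difference sequence bounded in $[-1,1]$ with conditional variance at most $\En\brk*{(c_{i,s_i^t}^t)^2 \mid \mathcal{F}_{t-1}} \leq \tri{w_i^t,c_i^t}$, because costs lie in $[0,1]$. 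A standard exponential supermartingale argument (Ville's inequality applied to $\exp(\lambda\sum_{s\leq t} M_i^s - \lambda^2\sum_{s\leq t}\tri{w_i^s,c_i^s})$) then yields, for each fixed $\eta \in (0,1/2]$ and $\delta'>0$, that with probability $\geq 1-\delta'$,
\[
\sum_{t=1}^T c_{i,s_i^t}^t \;\leq\; (1+\eta)\sum_{t=1}^T \tri{w_i^t,c_i^t} \;+\; \frac{\log(1/\delta')}{\eta}.
\]
A union bound over a dyadic grid $\eta \in \{2^{-k} : 0 \leq k \leq \lceil \log_2 T\rceil\}$ and over the $n$ players makes this inequality hold simultaneously for all grid values and all $i$ with probability $\geq 1-\delta$, at the cost of replacing $\log(1/\delta')$ by $c_1\log(n\log_2 T/\delta)$ for an absolute constant $c_1$; this peeling over scales of $\eta$ is the source of the $\log_2 T$ factor inside the log of the final bound.

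I would next pick $\eta = \epsilon$ from the grid and chain with the Low Approximate Regret property \eqref{eq:lar} applied with $f = s_i^*$, where $s^* = s^o$ is a fixed optimal action profile, giving for each player $i$
\[
(1-\epsilon)\sum_t c_{i,s_i^t}^t \;\leq\; (1+\epsilon)\brk*{\sum_t \cost_i(s_i^*, s_{-i}^t) + \frac{A(d,T)}{\epsilon}} + \frac{(1-\epsilon)\,c_1\log(n\log_2 T/\delta)}{\epsilon}.
\]
Summing over $i$, substituting $\sum_i\sum_t c_{i,s_i^t}^t = \sum_t C(s^t)$, and applying $(\lambda,\mu)$-smoothness at each round $t$ to bound $\sum_i \cost_i(s_i^*,s_{-i}^t) \leq \lambda\opt + \mu C(s^t)$, I would collect the $\sum_t C(s^t)$ terms on the left to obtain
\[
\brk*{(1-\epsilon) - \mu(1+\epsilon)}\sum_t C(s^t) \;\leq\; (1+\epsilon)\lambda T\opt + \frac{(1+\epsilon)\,n\,A(d,T)}{\epsilon} + \frac{(1-\epsilon)\,n\,c_1\log(n\log_2 T/\delta)}{\epsilon}.
\]
Substituting $\gamma = 2\epsilon/(1+\epsilon)$, which satisfies $1-\gamma = (1-\epsilon)/(1+\epsilon)$ and hence $(1-\epsilon) - \mu(1+\epsilon) = (1+\epsilon)(1-\mu-\gamma)$, dividing through by $T(1+\epsilon)(1-\mu-\gamma)$, and using $1/\epsilon \leq 2/\gamma$ then produces exactly the claimed bound (with $c_1$ chosen to make the log coefficient equal to $12$ and the $A(d,T)$ coefficient equal to $4$).

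The main obstacle is the concentration step: one must obtain a Freedman/Bernstein-type bound in the ``small-loss'' form $\sum_t M_i^t \leq \eta\sum_t V_i^t + \log(1/\delta)/\eta$ uniformly over a grid of $\eta$ values, so that the scale parameter can be matched to $\epsilon$ without knowing the realized conditional variance $\sum_t \tri{w_i^t,c_i^t}$ in advance. The stitching argument is standard but is what introduces the $\log\log T$ overhead in the bound. All remaining steps are direct analogues of the expectation proof of Proposition \ref{prop:poa_full_info}, with the only subtlety being the reparametrization $\epsilon \mapsto \gamma = 2\epsilon/(1+\epsilon)$, whose purpose is precisely to absorb the multiplicative $(1+\epsilon)$ factor introduced by the concentration inequality into the price-of-anarchy denominator.
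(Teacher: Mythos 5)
Your proof is correct and follows essentially the same route as the paper's: a per-player Freedman-type martingale concentration bound whose variance term is controlled by the learner's own expected cost (the ``small-loss'' bound), chained with the Low Approximate Regret inequality, union-bounded over the $n$ players, and fed into the standard smoothness argument with the reparametrization $\gamma=2\epsilon/(1+\epsilon)$. The only (immaterial) difference is that you obtain the linearized form directly from an exponential supermartingale at a fixed scale --- which makes the dyadic grid over $\eta$, and hence the $\log_2 T$ factor, actually unnecessary in your version since $\eta=\epsilon$ is known in advance --- whereas the paper invokes the refined Freedman inequality of Bartlett et al.\ and then linearizes the $\sqrt{\bar\sigma^2\log(1/\delta)}$ term via AM--GM.
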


\paragraph{Examples of Simple Low Approximate Regret Algorithms}
\label{sec:examples} 
Propositions \ref{prop:poa_full_info} and \ref{prop:poa_full_info-hp} are  informative when applied with algorithms for which $A(d,T)$ is sufficiently small. One would hope that such algorithms are relatively simple and easy to find. We show now that the well-known Hedge algorithm as well as basic variants such as Optimistic Hedge and Hedge with online learning rate tuning satisfy the property with $A(d,T)=O(\log{}d)$, which will lead to fast convergence both in terms of $n$ and $T$.
For these algorithms and indeed all that we consider in this paper, we can achieve the Low Approximate Regret property for any fixed $\eps>0$ via an appropriate parameter setting. In Appendix \ref{app:lar_full}, we provide full descriptions and proofs for these algorithms.

\begin{example}\label{thm:hedge_static}
Hedge satisfies the Low Approximate Regret property with $A(d,T) = \log(d)$. In particular one can achieve the property for any fixed $\eps>0$ by using $\eps$ as the learning rate.
\end{example}
\begin{example}\label{thm:hedge_tuned}
Hedge with online learning rate tuning
satisfies the Strong Low Approximate Regret property with $A(d,T) = O(\log{}d)$.
\end{example}
\begin{example}\label{thm:optimistic_hedge_static}
Optimistic Hedge 
satisfies the Low Approximate Regret property with $ A(d,T) =  8 \log(d)$. As with vanilla Hedge, we can choose the learning rate to achieve the property with any $\eps$. 
\end{example}
\begin{example}
\label{ex:small_loss}
Any algorithm satisfying a ``small loss'' regret bound of the form $\sqrt{(\textrm{Learner's cost})\cdot{}A}$ or $\sqrt{(\textrm{Cost of best action})\cdot{}A}$ satisfies Strong Low Approximate Regret via the AM-GM inequality, i.e. $\sqrt{(\textrm{Learner's cost})\cdot{}A} \propto \inf_{\eps>0}\brk*{\eps\cdot{}(\textrm{Learner's cost}) + A/\eps}$.
In particular, this implies that the following algorithms have Strong Low Approximate Regret: Canonical small loss and self-confident algorithms, e.g. \cite{Freund97,Auer2002Adaptive,Yaroshinsky2004}, Algorithm of \cite{Cesa2007Improved}, Variation MW \cite{Hazan2010Extracting}, AEG-Path \cite{Steinhardt2014Adaptivity}, AdaNormalHedge \cite{Luo2015}, Squint \cite{Koolen2015Second}, and Optimistic PAC-Bayes \cite{Foster2015Adaptive}.
\end{example}

Example \ref{ex:small_loss} shows that the Strong Low Approximate Regret property in fact is ubiquitous, as it is satisfied by any algorithm that provides small loss regret bounds or one of many variants on this type of bound. 
Moreover, all algorithms that satisfy the Low Approximate Regret property for all fixed $\eps$ can be made to satisfy the strong property using the doubling trick.

\paragraph{Main Result for Full Information Games:}
\label{comparison_Vasilis} 

\begin{theorem}
\label{thm:static}
In any $(\lambda,\mu)$-smooth game, if all players use Low Approximate Regret algorithms satisfying (\ref{eq:lar}) for parameter $\epsilon$ \footnote{We can also show that the theorem  holds if players satisfy the property for different values of $\eps$, but with a dependence on the worst case value of $\eps$ across all players.} and $A(d,T)=O(\log d)$, then
{\small
$$
\frac{1}{T}\sum_t\En\brk*{C(s^t)}\leq \frac{\lambda}{1-\mu-\epsilon}\opt+\frac{n}{T}\cdot\frac{1}{1-\mu-\epsilon}\cdot \frac{O(\log d)}{\epsilon},
$$
}
and furthermore, $\forall \delta >0$, with probability at least $1 - \delta$,
{\small
$$
\frac{1}{T}\sum_t\En\brk*{C(s^t)}\leq \frac{\lambda}{1-\mu-\epsilon}\opt+\frac{n}{T}\cdot\frac{1}{1-\mu-\epsilon}\cdot \brk*{\frac{O(\log d)}{\epsilon}+\frac{O(\log(n\log_2 (T)/\delta))}{\epsilon}}.
$$
}
\end{theorem}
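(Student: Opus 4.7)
The plan is to derive Theorem~\ref{thm:static} as a direct corollary of Propositions~\ref{prop:poa_full_info} and~\ref{prop:poa_full_info-hp}, specialized to the case $A(d,T) = O(\log d)$. Existence of algorithms achieving this value of $A$ is supplied by Examples~\ref{thm:hedge_static}--\ref{thm:optimistic_hedge_static}: vanilla Hedge with learning rate $\epsilon$, Hedge with online learning rate tuning, and Optimistic Hedge all satisfy Low Approximate Regret with $A(d,T) = O(\log d)$ at any prescribed approximation level $\epsilon$. Hence the hypotheses of both propositions are met with this choice of $A$, and each player can be assigned any such algorithm (mixing and matching across players is permitted, as the footnote allows the bound to depend on the worst-case $\epsilon$).

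For the in-expectation statement, I would invoke Proposition~\ref{prop:poa_full_info} with $A(d,T) = O(\log d)$ and substitute directly into the inequality. This yields the first display of the theorem without further work, since the generic bound already has the desired form $\tfrac{\lambda}{1-\mu-\epsilon}\opt + \tfrac{n}{T}\cdot \tfrac{1}{1-\mu-\epsilon}\cdot \tfrac{A(d,T)}{\epsilon}$.

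To obtain the high-probability statement, I would apply Proposition~\ref{prop:poa_full_info-hp}. The only wrinkle is that the proposition is phrased in terms of $\gamma = 2\epsilon/(1+\epsilon)$ rather than $\epsilon$, so a mild reparameterization is needed to match the theorem's display. Concretely, I would have each player run their Low Approximate Regret algorithm with parameter $\epsilon' = \epsilon/3$; then $\gamma = 2\epsilon'/(1+\epsilon') \leq \epsilon$, so $\tfrac{\lambda}{1-\mu-\gamma} \leq \tfrac{\lambda}{1-\mu-\epsilon}$. With $A(d,T) = O(\log d)$, the $4A(d,T)/\gamma$ and $12\log(n\log_2(T)/\delta)/\gamma$ terms collapse into $O(\log d/\epsilon)$ and $O(\log(n\log_2(T)/\delta)/\epsilon)$ respectively, which is exactly the form appearing in the theorem.

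Because the technical work has already been carried out in the two propositions, the theorem is essentially a packaging step rather than a standalone proof. The only care required is ensuring that the constant-factor gap between $\gamma$ and $\epsilon$ in the high-probability bound is absorbed without degrading the leading multiplicative factor $\lambda/(1-\mu-\epsilon)$; the rescaling $\epsilon' = \epsilon/3$ accomplishes this cleanly in the regime $\epsilon < 1-\mu$ where the statement is informative.
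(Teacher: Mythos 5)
Your proposal matches the paper's own proof, which simply observes that Theorem~\ref{thm:static} follows immediately from Propositions~\ref{prop:poa_full_info} and~\ref{prop:poa_full_info-hp} with $A(d,T)=O(\log d)$. Your explicit reparameterization $\epsilon'=\epsilon/3$ to reconcile the $\gamma=2\epsilon'/(1+\epsilon')$ appearing in Proposition~\ref{prop:poa_full_info-hp} with the $\epsilon$ in the theorem's display is a detail the paper glosses over, and it is handled correctly.
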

\begin{corollary}
\label{corr:strong}
If all players use Strong Low Approximate Regret algorithms then:
1. The above results hold for all $\epsilon>0$ simultaneously.
2. Individual players have regret bounded by $O\prn{T^{-1/2}}$, even in adversarial settings.
3. The players approach a coarse correlated equilibrium asymptotically.
\end{corollary}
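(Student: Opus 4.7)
\textbf{Plan for Corollary \ref{corr:strong}.}

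For part (1), the conclusion is essentially a bookkeeping observation. The proof of Theorem \ref{thm:static} (via Propositions \ref{prop:poa_full_info} and \ref{prop:poa_full_info-hp}) invokes the LAR inequality \eqref{eq:lar} only \emph{once}, with a particular value of $\epsilon$, to bound $\sum_t \langle w_i^t, c_i^t \rangle$ against the comparator $f = s^*_i$. By definition, a Strong Low Approximate Regret algorithm satisfies \eqref{eq:lar} at every $\epsilon > 0$ using the \emph{same} sequence of plays $w_i^t$. Hence the entire chain of inequalities in the proof can be instantiated simultaneously for every $\epsilon > 0$, giving the stated bound for all $\epsilon > 0$ with a single run of the algorithm. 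No re-tuning is needed because the players never had to know $\epsilon$ in the first place.

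For part (2), I would start from the SLAR inequality applied with the best fixed action $f^\star \in \arg\min_{f \in \Delta(S_i)} \sum_t \langle f, c_i^t \rangle$ in hindsight. Rearranging \eqref{eq:lar} gives
\[
\sum_t \langle w_i^t, c_i^t \rangle - \sum_t \langle f^\star, c_i^t \rangle \;\le\; \epsilon \sum_t \langle w_i^t, c_i^t \rangle + \frac{A(d,T)}{\epsilon} \;\le\; \epsilon T + \frac{A(d,T)}{\epsilon},
\]
where I used that costs lie in $[0,1]$ so the learner's cumulative cost is at most $T$. Since this holds for every $\epsilon > 0$, I would pick $\epsilon = \sqrt{A(d,T)/T}$, yielding cumulative regret $\le 2\sqrt{A(d,T) \cdot T}$, i.e.\ average regret $O(\sqrt{A(d,T)/T}) = O(T^{-1/2})$ up to logarithmic factors. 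Crucially, the SLAR inequality \eqref{eq:lar} compares the learner to an arbitrary $f$ on an arbitrary cost sequence, so this bound is valid in fully adversarial environments, not only in the game setting.

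For part (3), I would combine part (2) with the standard equivalence between vanishing regret and convergence to coarse correlated equilibrium. Let $\hat\sigma^T = \frac{1}{T}\sum_{t=1}^T \bigotimes_{i=1}^n w_i^t$ denote the empirical product-of-marginals distribution over action profiles. For each player $i$ and each deviation $s_i' \in S_i$, the regret bound from part (2) gives
\[
\mathbb{E}_{s \sim \hat\sigma^T}\bigl[\cost_i(s)\bigr] - \mathbb{E}_{s \sim \hat\sigma^T}\bigl[\cost_i(s_i', s_{-i})\bigr] \;\le\; \frac{1}{T}\bigl(\sum_t \langle w_i^t, c_i^t \rangle - \sum_t c_{i,s_i'}^t\bigr) \;=\; O(T^{-1/2}),
\]
so $\hat\sigma^T$ is an $O(T^{-1/2})$-approximate CCE and the sequence converges to a CCE in the usual sense.

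The main obstacle, such as it is, is conceptual rather than technical: one must notice that SLAR delivers \eqref{eq:lar} uniformly in $\epsilon$, which is exactly the extra hypothesis that allows the tuning $\epsilon = \sqrt{A(d,T)/T}$ in part (2) to be done \emph{ex post} by the analyst rather than baked into the algorithm. Once this is in hand, parts (1) and (3) are immediate consequences of, respectively, the proof of Theorem \ref{thm:static} and the textbook regret-to-CCE reduction.
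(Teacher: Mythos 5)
Your proposal is correct and matches the paper's own (one-line) proof: the paper simply observes that Strong Low Approximate Regret lets one set $\epsilon=\sqrt{\log(d)/T}$ in \eqref{eq:lar} to obtain the $O(T^{-1/2})$ regret bound, which is exactly your tuning $\epsilon=\sqrt{A(d,T)/T}$ with $A(d,T)=O(\log d)$. Parts (1) and (3) are treated by the paper as immediate, and your elaborations (instantiating the chain of inequalities for every $\epsilon$, and the textbook regret-to-CCE reduction) are the standard arguments the authors have in mind.
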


\paragraph{Comparison with Syrgkanis et al. \cite{SyrgkanisALS15}.}
By relaxing the standard $\lambda/(1-\mu)$ price of anarchy bound, Theorem \ref{thm:static} substantially broadens the class of algorithms that experience fast convergence to include even the common Hedge algorithm. The main result of \cite{SyrgkanisALS15} shows that learning algorithms that satisfy their RVU property  converge to the price of anarchy bound $\lambda/(1-\mu)$ at the rate of  $n^2\log{}d/T$. They further show how to achieve a worse approximation of $\lambda(1+\mu)/(\mu(1-\mu))$ at the improved (in terms of $n$) rate of  $n \log{}d/T$. 
We converge to an approximation arbitrarily close to $\lambda/(1+\mu)$ at a rate of $n \log{}d/T$. Note that in atomic congestion games with affine congestion function $\mu=1/3$, so their bound of $\lambda(1+\mu)/\mu(1-\mu)$ loses a factor of 4 compared to the price of anarchy.

Strong Low Approximate Regret algorithms such as Hedge with online learning rate tuning simultaneously experience both fast $O(n/T)$ convergence in games and an $O(1/\sqrt{T})$
bound on individual regret in adversarial settings. In contrast, \cite{SyrgkanisALS15} only shows $O(n/\sqrt{T})$ individual regret and $O(n^{3}/T)$ convergence to price of anarchy simultaneously.

Low Approximate Regret algorithms only need realized feedback, whereas \cite{SyrgkanisALS15} require expectation feedback. Having players receive expectation feedback is  unrealistic in terms of both information and computation. Indeed, even if the necessary information was available, computing expectations over discrete probability distributions is not tractable in the general case unless $n$ is taken to be constant.

Our results imply that Optimistic Hedge enjoys the best of two worlds: It enjoys fast convergence to the exact $\lambda/(1-\mu)$ price of anarchy using expectation feedback as well as fast convergence to the $\eps$-approximate price of anarchy using realized feedback. Our new analysis of Optimistic Hedge (Appendix \ref{app:proof_optimistic_hedge_static}) sheds light on another desirable property of this algorithm: Its regret is bounded in terms of the net cost incurred by Hedge. Figure \ref{fig:comparison} summarizes the differences between our results.

\begin{figure}[h!]
\centering
{\small
\begin{tabular}{|l|l|l|l|l|}
\hline
 & Feedback & POA & Rate & Time comp.  \\
 \hline
     \begin{tabular}{l}RVU property~{\small\cite{SyrgkanisALS15}}\end{tabular}& \begin{tabular}{l}Expected costs\end{tabular} & exact & $O(n^2 \log d/T)$ & $d^{O(n)}$ per round\\
     \hline
     \begin{tabular}{l}LAR property ~ {\small(section \ref{sec:model})}\end{tabular} & \begin{tabular}{l}Realized costs\end{tabular} & $\epsilon$-approx & $O(n \log d/(\epsilon T))$ & $O(d)$ per round\\
     \hline
\end{tabular}}
\caption{Comparison of Low Approximate Regret and RVU properties.\vspace{-0.1in}}
\label{fig:comparison}
\end{figure}

\section{Bandit 
Feedback}\label{sec:bandit}


In many realistic scenarios, the players of a game might not even know what they would have lost or gained if they had deviated from the action they played.  
We model this lack of information with \emph{bandit feedback}, in which each player observes a single scalar, $\mathbf{cost}_i(s^t) = \tri{s^t_i, c_i^t}$, per round.\footnote{With slight abuse of notation, $s^t_i$ denotes the identity vector associated to the strategy player $i$ used at time $t$.}
When the game considered is smooth, one can use the Low Approximate Regret property as in the full information setting to show that players quickly converge to efficient outcomes. Our results here hold with the same generality as in the full information setting: As long as learners satisfy the Low Approximate Regret property \eqref{eq:lar}, an efficiency result analogous to Proposition \ref{prop:poa_full_info} holds.

\begin{proposition}
\label{thm:bandit}
Consider a $(\lambda,\mu)$-smooth game.  If all players use bandit learning algorithms with Low Approximate Regret $A(d,T)$ then
{\small
\[
\frac{1}{T}\En\brk*{\sum_tC(s^t)}\leq \frac{\lambda}{1-\mu-\epsilon}\opt+\frac{n}{T}\cdot\frac{1}{1-\mu-\epsilon}\cdot \frac{A(d,T)}{\epsilon}.
\]
}
\end{proposition}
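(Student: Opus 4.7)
The plan is to mirror the proof of Proposition \ref{prop:poa_full_info} and simply check that each step survives the weakening of the Low Approximate Regret guarantee to an in-expectation statement in the bandit setting. In particular, the smoothness argument is pointwise in the action profile $s^t$ and is therefore oblivious to whether feedback is full information or bandit; the only place where bandit-vs-full-information matters is in the invocation of \eqref{eq:lar}.

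First I would fix an optimal action profile $s^o$ achieving $\opt$, which is deterministic and does not depend on play, so each player $i$'s coordinate $s^o_i$ is an admissible fixed comparator $f \in \Delta(S_i)$ for the bandit form of Low Approximate Regret. Applying \eqref{eq:lar} with $f = s^o_i$ for each player $i$ in the bandit version (which holds in expectation over player $i$'s internal randomization) and taking total expectation over the randomness of all players yields
\[
(1-\epsilon)\sum_t \En\brk*{\tri{w_i^t,c_i^t}} \leq \sum_t \En\brk*{\cost_i(s^o_i, s_{-i}^t)} + \frac{A(d,T)}{\epsilon}.
\]
Since $\En\brk*{\cost_i(s^t)\mid s_{-i}^t} = \tri{w_i^t,c_i^t}$, summing over $i$ gives
\[
(1-\epsilon)\sum_t \En\brk*{C(s^t)} \leq \sum_t \En\brk*{\sum_i \cost_i(s^o_i,s_{-i}^t)} + n\cdot\frac{A(d,T)}{\epsilon}.
\]

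Next I would apply the smoothness property pointwise: for every realized profile $s^t$, $\sum_i \cost_i(s^o_i,s_{-i}^t) \leq \lambda\cdot\opt + \mu\cdot C(s^t)$. Taking expectation and substituting into the previous display produces
\[
(1-\mu-\epsilon)\sum_t\En\brk*{C(s^t)} \leq \lambda T\cdot\opt + n\cdot\frac{A(d,T)}{\epsilon},
\]
after which dividing by $T(1-\mu-\epsilon)$ yields the claimed bound.

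I do not expect any serious obstacle: the entire content of the proof is the observation that bandit Low Approximate Regret still applies to any comparator fixed before play, and the optimal profile $s^o$ provides exactly such a fixed comparator for each player. The only care needed is to keep the expectation outside and not to condition in a way that would require the regret bound to hold in a stronger sense than \eqref{eq:lar} guarantees; taking expectations over all player randomness before invoking smoothness and \eqref{eq:lar} avoids this issue.
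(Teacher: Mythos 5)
Your proposal is correct and follows exactly the argument the paper intends: the paper gives no separate proof for the bandit case, instead noting (as you do) that the smoothness argument from Proposition \ref{prop:poa_full_info} goes through verbatim because the optimal profile $s^o_i$ is a comparator fixed before play, which is all the bandit form of Low Approximate Regret requires. Your handling of the expectations and the pointwise application of smoothness matches the paper's proof of Proposition \ref{prop:poa_full_info}.
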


\paragraph{Bandit Algorithms with Low Approximate Regret}
The bandit Low Approximate Regret property requires that \eqref{eq:lar} holds in expectation against any sequence of adaptive and potentially adversarially chosen costs, but only for an obliviously chosen comparator $f$.\footnote{This is because we only need to evaluate \eqref{eq:lar} with the game's optimal solution $s^{\star}$ to prove efficiency results.} This is weaker than requiring that an algorithm achieve a true expected regret bound; it is closer to pseudo-regret.

The Exp3Light algorithm \cite{Stoltz05} satisfies Low Approximate Regret with $A(d,T) = d^2\log{}T$. The \scrible{} algorithm introduced in \cite{Abernethy08competingin} (via the analysis in \cite{RakhlinS13predictablesequences}) enjoys
the Low Approximate Regret property with $A(d,T) = d^3 \log(dT)$. The GREEN algorithm introduced in \cite{Allenberg2006} achieves the Low Approximate Regret property with $A(d,T)=d\log(T)$, but only works with costs and not gains. This prevents it from being used in utility settings such as auctions, as in Appendix \ref{app:utility}.

We present a new bandit algorithm (Algorithm \ref{alg:bandit_alg}) that achieves Low Approximate Regret with $A(d,T) = d \log(T/d)$ and thus matches the performance of GREEN, but works in both cost minimization and utility maximization settings. This method is based on Online Mirror Descent with a logarithmic barrier for the positive orthant, but differs from earlier algorithms based on the logarithmic barrier (e.g. \cite{RakhlinS13predictablesequences}) in that it uses the classical importance-weighted estimator for costs instead of sampling based on the Dikin elipsoid. It can be implemented in $\tilde{O}(d)$ time per round, using line search to find $\gamma$. We provide the proof of this lemma and further discussion of Algorithm \ref{alg:bandit_alg} in Appendix \ref{app:bandit}.

\textbf{Algorithm 3: }{\bf Initialize $w^1$  to the uniform distribution. On each round $t$, perform update: }
{\small
\begin{align}\label{alg:bandit_alg}
\textbf{Algorithm  \ref{alg:bandit_alg} update: }~~~~~{\bf w^t_{s^{t-1}} = \frac{w^{t-1}_{s^{t-1}}}{1 + \eta c^t_{s^{t-1}} + \gamma w^{t-1}_{s^{t-1}}} ~~~ \textrm{and}~~~\forall j \ne s^{t-1}~~  w^t_{j} = \frac{w^{t-1}_{j}}{1  + \gamma w^{t-1}_{j}}},
\end{align}}{\bf where $\gamma \le0$ is chosen so that $w^t$ is a valid probability distribution.
}
\begin{lemma}\label{lem:bandit_static}
Algorithm \ref{alg:bandit_alg} with $\eta =\epsilon/(1+\epsilon)$ has Low Approximate Regret with $A(d,T)=O(d\log T)$.
\end{lemma}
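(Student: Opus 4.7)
The plan is to recognize Algorithm~\ref{alg:bandit_alg} as Online Mirror Descent on the simplex with log-barrier regularizer $\Phi(w) = -\sum_{i=1}^{d} \log w_i$, applied to the importance-weighted cost estimator $\hat{c}^{t}_{i} = c^{t}_{i}\,\mathbf{1}\{s^{t-1} = i\}/w^{t-1}_i$. The Karush--Kuhn--Tucker conditions for the implicit step $w^{t} = \arg\min_{w \in \Delta(S_i)}\bigl\{\eta\,\tri{w, \hat{c}^{t}} + D_\Phi(w, w^{t-1})\bigr\}$ reduce to $1/w^{t}_i = 1/w^{t-1}_i + \eta\,\hat{c}^{t}_i + \gamma$, matching the displayed rule with $\gamma$ as the Lagrange multiplier enforcing $\sum_i w^{t}_i = 1$ and $\hat{c}^{t}_i$ supported only on the played coordinate. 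Initializing at the uniform distribution gives $D_\Phi(f, w^{1}) = O(d\log T)$ whenever $f$ puts mass at least $1/(dT)$ on every action; an arbitrary comparator $f \in \Delta(S_i)$ is handled by the standard perturbation $f \mapsto (1-1/T)f + (1/T)\mathbf{1}/d$, which shifts the comparator term $\sum_t \tri{f, c^{t}}$ by at most $O(1)$ since costs are bounded.

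From here I would follow the standard OMD regret template,
\[
\sum_{t} \tri{w^{t} - f,\, \hat{c}^{t}} \;\le\; \frac{D_\Phi(f, w^{1})}{\eta} + \eta \sum_{t} \|\hat{c}^{t}\|_{(\nabla^{2}\Phi(w^{t}))^{-1}}^{2},
\]
where the inverse log-barrier Hessian at $w^t$ is diagonal with entries $(w^{t}_i)^{2}$. The decisive feature of this regularizer for bandits is the identity $\sum_{i} (w^{t}_i)^{2} (\hat{c}^{t}_i)^{2} = (c^{t}_{s^{t-1}})^{2} \le c^{t}_{s^{t-1}}$, which uses nothing more than $c^t_i \in [0,1]$ and the support of $\hat{c}^t$. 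Taking conditional expectation bounds this by $\tri{w^{t-1}, c^{t}}$, the learner's instantaneous expected cost, so that after summing, using $\En[\hat{c}^{t}\mid{}w^{t-1}] = c^{t}$, and using the Bregman bound above, we obtain the self-bounding inequality $(1 - \eta)\,\En[\sum_t \tri{w^{t}, c^{t}}] \le \En[\sum_t \tri{f, c^{t}}] + O(d\log T)/\eta$. Plugging in $\eta = \epsilon/(1+\epsilon)$ and multiplying through by $1+\epsilon$ gives the bandit Low Approximate Regret bound with $A(d,T) = O(d\log T)$.

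The main technical obstacle is that the local-norm second-order term that naturally comes out of the mirror-descent Taylor expansion is evaluated at some intermediate point $\xi^t$ on the segment between $w^{t-1}$ and $w^{t}$, not at $w^{t}$ itself, and in the bandit setting $\hat{c}^{t}_{s^{t-1}}$ can be as large as $1/w^{t-1}_{s^{t-1}}$, so the one-step change in coordinates is not automatically small. The standard remedy, which I expect to carry over unchanged, is the log-barrier stability lemma: an inductive argument shows that for $\eta \le 1$ the update $1/w^{t}_i = 1/w^{t-1}_i + \eta\,\hat{c}^{t}_i + \gamma$ keeps $w^{t}_{s^{t-1}} \in [w^{t-1}_{s^{t-1}}/2,\,2\,w^{t-1}_{s^{t-1}}]$ on the played coordinate, so that $\xi^t$ is coordinate-wise within a constant factor of $w^{t-1}$ and only a multiplicative constant is absorbed into the second-order bound. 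The remaining ingredients --- unbiasedness of $\hat{c}^{t}$, existence and uniqueness of the normalizing $\gamma$ via monotonicity of $\sum_i w^{t}_i$ in $\gamma$, and a telescoping argument for $\sum_t D_\Phi(f, w^{t}) - D_\Phi(f, w^{t+1})$ --- are routine given these ingredients.
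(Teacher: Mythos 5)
Your proposal is correct and follows essentially the same route as the paper: OMD with the log-barrier regularizer on importance-weighted estimates, the $(1-\theta)f+\theta\pi$ comparator smoothing to control $D_R(f\mid w^1)\le d\log(1/\theta)$, and the self-bounding step that folds the learner's cost into the left-hand side. The only difference is technical: where you invoke a local-norm bound at an intermediate point and patch it with a multiplicative-stability lemma, the paper computes $\tri{w^t-\widetilde{w}^{t+1},\hat{c}^t}=\eta\sum_j (w^t_j\hat{c}^t_j)^2/(1+\eta w^t_j\hat{c}^t_j)$ exactly from the closed form of the update, so the numerator collapses to $\ind\{s^t=j\}(c^t_j)^2$ and no stability argument is needed.
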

\paragraph{Comparison to Other Algorithms}
In contrast to the full information setting where the most common algorithm, Hedge, achieves Low Approximate Regret with competitive parameters, the most common 
adversarial bandit algorithm \expthree{} does not seem to satisfy Low Approximate Regret.
\cite{Alletal06} provide a small loss bound for bandits which would be sufficient for Low Approximate Regret, but their algorithm requires prior knowledge on the loss of the best action (or a bound on it), which is not appropriate in our game setting. Similarly, the small loss bound in \cite{Neu15} is not applicable in our setting as the work assumes an oblivious adversary and so does not apply to the games we consider.

\section{Dynamic Population Games}\label{sec:dynamic}


In this section we consider the dynamic population repeated game setting introduced in \cite{LykourisST16}. Detailed discussion and all proofs are deferred to Appendix \ref{app:dynamic}.
Given a game $G$ as described in Section \ref{sec:model}, a \emph{dynamic population game} with \emph{stage game} $G$ is a repeated game where at each round $t$ game $G$ is played and every player $i$ is replaced by a new player with a \emph{turnover probability} $p$. Concretely, when a player turns over, their strategy set and cost function are changed arbitrarily subject to the rules of the game. This models a repeated game setting where players have to adapt to an adversarially changing environment. We denote the cost function of player $i$ at round $t$ as $\cost_i^t(\cdot)$. As in Section \ref{sec:static}, we assume that the players receive full information feedback. At the end of each round they observe the entire cost vector $c_i^t=\cost_i^t(\cdot,s_{-i}^t)$, but are not aware of the costs of the other players in the game. 

\paragraph{Learning in Dynamic Population Games and the Price of Anarchy} To guarantee small overall cost using the smoothness analysis from Section \ref{sec:approx-efficiency}, players need to exhibit low regret against a shifting benchmark $s_i^{*t}$ of socially optimal strategies achieving $\opt^t=\min_{s^{*t}}\sum_i \cost_i^t(s^{*t})$. 
Even with a small probability $p$ of change, the sequence of optimal solutions can have too many changes to be able to achieve low regret. In spite of this apparent difficulty, \cite{LykourisST16} prove that at least a $\rho\lambda/(1-\mu-\epsilon)$ fraction of the optimal welfare is guaranteed if 
1. players are using low \textit{adaptive regret} 
algorithms (see \cite{Hazan2009, Luo2015}) and 2. for the underlying optimization problem there exists a relatively \emph{stable} sequence of solutions which at each step approximate the optimal solution by a factor of $\rho$. This holds as long as the turnover probability $p$ is upper bounded by a function of $\epsilon$ (and of certain other properties of the game, such as the stability of the close to optimal solution).

We consider dynamic population games where each player uses a learning algorithm satisfying Low Approximate Regret for shifting experts \eqref{eq:lar_shifting}. This shifting version of Low Approximate Regret implies a dynamic game analog of our main efficiency result, Proposition \ref{prop:poa_full_info}.
 
\paragraph{Algorithms with Low Approximate Regret for Shifting Experts} A simple variant of Hedge we term \emph{Noisy Hedge}, which mixes the Hedge update at each round with a small amount of uniform noise, satisfies the Low Approximate Regret property for shifting experts with $A(d,T)=O(\log (dT))$. Moreover, algorithms that satisfy a small loss version of the \emph{adaptive regret} property \cite{Hazan2009} used in \cite{LykourisST16} satisfy the Strong Low Approximate Regret property. 

\begin{proposition}\label{prop:noisy_hedge}
Noisy Hedge with learning rate $\eta=\epsilon$ satisfies the Low Approximate Regret property for shifting experts with $A(d,T)=2\log(dT)$.
\end{proposition}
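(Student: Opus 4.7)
The plan is to instantiate Noisy Hedge as a two-step update: the standard Hedge multiplicative update $\tilde{w}^{t+1}_i \propto w^t_i \exp(-\eta c^t_i)$, followed by mixing with the uniform distribution at rate $\alpha = 1/T$, so that $w^{t+1}_i = (1-\alpha)\tilde{w}^{t+1}_i + \alpha/d$. Initializing $w^1$ to the uniform distribution ensures $w^t_i \geq \alpha/d = 1/(dT)$ for every $t$ and $i$, which is the only property we need from the noise. Given an adversarially chosen shifting sequence $f^1,\ldots,f^T$ with $K$ shifts, I partition $\{1,\ldots,T\}$ into $K+1$ maximal intervals $I_1,\ldots,I_{K+1}$ on which $f^t$ is constant, writing $f^t = f_k$ for $t \in I_k = [t_k, t_{k+1}-1]$, and bound the approximate regret segment by segment.

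Within a single segment I track the KL potential $\Psi^t = \mathrm{KL}(f_k \| w^t)$. The Hedge sub-step satisfies the classical one-round inequality
\[
\mathrm{KL}(f_k \| \tilde{w}^{t+1}) - \mathrm{KL}(f_k \| w^t) \leq -\eta \langle w^t - f_k, c^t \rangle + \eta^2 \langle w^t, (c^t)^2 \rangle,
\]
while the mixing sub-step, by convexity of $\mathrm{KL}(\cdot\|\cdot)$ in its second argument, yields
\[
\mathrm{KL}(f_k \| w^{t+1}) \leq (1-\alpha)\,\mathrm{KL}(f_k \| \tilde{w}^{t+1}) + \alpha\,\mathrm{KL}(f_k \| \mathbf{1}/d) \leq \mathrm{KL}(f_k \| \tilde{w}^{t+1}) + \alpha \log d.
\]
Chaining these two inequalities, summing over $t \in I_k$, telescoping the potential, and using $(c^t_i)^2 \leq c^t_i$ (costs lie in $[0,1]$) to absorb the second-order term into the left-hand side produces
\[
(1 - \eta)\sum_{t \in I_k} \langle w^t, c^t \rangle \leq \sum_{t \in I_k} \langle f_k, c^t \rangle + \frac{\mathrm{KL}(f_k \| w^{t_k})}{\eta} + \frac{|I_k|\,\alpha \log d}{\eta}.
\]

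To finish, I invoke the noise floor to bound $\mathrm{KL}(f_k \| w^{t_k}) \leq -\log\min_i w^{t_k}_i \leq \log(dT)$ at every segment start, then sum over all $K+1$ segments. Since $\sum_k |I_k|\alpha = T\alpha = 1$, the cumulative per-round noise term collapses to exactly $\log d / \eta$, giving
\[
(1-\eta)\sum_{t=1}^T \langle w^t, c^t \rangle \leq \sum_{t=1}^T \langle f^t, c^t \rangle + \frac{(K+1)\log(dT) + \log d}{\eta} \leq \sum_{t=1}^T \langle f^t, c^t \rangle + (1+K)\cdot\frac{2\log(dT)}{\eta}.
\]
Setting $\eta = \epsilon$ delivers the claimed Low Approximate Regret bound with $A(d,T) = 2\log(dT)$.

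The main obstacle is reconciling the noise-mixing step with the Hedge update inside the potential analysis; the convexity-of-KL argument handles this cleanly, without needing to know where shifts occur (which is essential since the algorithm is oblivious to them). The calibration $\alpha = 1/T$ is the sweet spot that simultaneously guarantees a uniform noise floor $\log(1/\min_i w^t_i) \leq \log(dT)$ across segments and keeps the cumulative per-round mixing penalty $T\alpha\log d$ down to $\log d$, so both effects contribute only a single $\log(dT)/\eta$ factor to the final bound.
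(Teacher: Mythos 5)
Your proof is correct and follows essentially the same route as the paper's: a Hedge potential argument with the KL divergence, telescoped within each of the $K+1$ constant segments, using the $\theta=1/T$ noise floor to bound the potential by $\log(dT)$ at each segment start and paying a cumulative $T\theta\log d/\eta = \log d/\eta$ for the mixing. Your use of convexity of $\mathrm{KL}(f\,\|\,\cdot)$ to control the mixing step is a slightly cleaner alternative to the paper's per-coordinate bound on $\log(g^t_j/w^t_j)$ (which it carries out only for indicator comparators), but the decomposition and all key estimates are the same.
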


\paragraph{Extending Proposition \ref{prop:poa_full_info} to the Dynamic Population Game Setting}
Let $s^{*1:T}$ denote a stable sequence of near-optimal solutions $s^{*t}$ with $\sum_i \cost_i^t(s^{*t}) \le \rho\cdot \opt^t$ for all rounds $t$. As discussed in \cite{LykourisST16}, such stable sequences can come from simple greedy algorithms (where each change in the input of one player affects the output of few other players) or via differentially private algorithms (where each change in the input of one player affects the output of all other players with small probability); in the latter case the sequence is randomized.
For a deterministic sequence  $s_i^{*1:T}$ of player $i$'s actions, we let the random variable $K_i$ denote the number of changes in the sequence. For a randomized sequence $s_i^{*1:T}$, we let $K_i$ be the sum of total variation distances between subsequent pairs $s_i^{*t-1}$ and $s_i^{*t}$. The stability of a sequence of solutions is determined by $\mathbb{E}[\sum_i K_i]$. 
\begin{proposition}{(PoA with Dynamic Population)}
\label{thm:poa_dynamic}
If all players use Low Approximate Regret algorithms satisfying (\ref{eq:lar_shifting}) in a dynamic population game, where the stage game is $(\lambda,\mu)$-smooth, and $K_i$
as defined above then 
{\small
\begin{equation}
\label{eq:dynamic_efficiency}
\frac{1}{T}\sum_t \En\brk*{C(s^t)}\leq \frac{1}{T} \frac{\lambda\cdot \rho}{1-\mu-\epsilon}\sum_t\En\brk*{\opt^t}+\frac{n+\En\brk*{\sum_i K_i}}{T}\cdot\frac{1}{1-\mu-\epsilon}\cdot\frac{A(d,T)}{\epsilon}.
\end{equation}}Here the expectation is taken over the random turnover in the population playing the game, as well as the random choices of the players on the left hand side. 
\end{proposition}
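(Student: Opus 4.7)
The plan is to mimic the price-of-anarchy argument from Proposition \ref{prop:poa_full_info}, but with the shifting-experts version of Low Approximate Regret to handle the fact that the socially near-optimal benchmark is itself moving, and with an additional coupling step to deal with the case in which the near-optimal sequence is randomized.

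First, condition on the random turnover. Fix a stable sequence $s^{*1:T}$ with $\sum_i \cost_i^t(s^{*t}) \leq \rho\cdot\opt^t$ at every round. For each player $i$, invoke the shifting-experts Low Approximate Regret property \eqref{eq:lar_shifting} with comparator sequence $f^t = s_i^{*t}$, giving
\[
(1-\epsilon)\sum_t \tri{w_i^t, c_i^t} \;\leq\; \sum_t \tri{s_i^{*t}, c_i^t} + (1+K_i)\frac{A(d,T)}{\epsilon}.
\]
Here $\tri{s_i^{*t}, c_i^t} = \cost_i^t(s_i^{*t}, s_{-i}^t)$ and, after summing over $i$, the stage-game $(\lambda,\mu)$-smoothness from Definition \ref{def:smoothness} applied at the realized profile $s^t$ and reference profile $s^{*t}$ yields
\[
\sum_i \cost_i^t(s_i^{*t}, s_{-i}^t) \;\leq\; \lambda\, C^t(s^{*t}) + \mu\, C^t(s^t) \;\leq\; \lambda\rho\,\opt^t + \mu\, C^t(s^t).
\]
Summing the two displays over $i$ and $t$, and using $\sum_i \tri{w_i^t, c_i^t} = \En\brk{C^t(s^t)\mid s_{-i}^t}$ in the usual way, gives
\[
(1-\mu-\epsilon)\sum_t \En\brk*{C^t(s^t)} \;\leq\; \lambda\rho\sum_t \opt^t + \Big(n + \sum_i K_i\Big)\frac{A(d,T)}{\epsilon},
\]
where the expectation is over the players' mixed strategies. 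Dividing by $T(1-\mu-\epsilon)$ recovers the inequality \eqref{eq:dynamic_efficiency} pointwise in the turnover randomness (and in a deterministic near-optimal sequence). Taking expectation over the remaining randomness finishes the deterministic case.

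The main obstacle is the randomized-sequence case, where $K_i$ is defined as a sum of total-variation distances rather than an integer count of shifts, since the shifting-experts property \eqref{eq:lar_shifting} as stated only controls integer-valued shifts along a fixed sequence. To reconcile this, I would use the standard maximal coupling: for each round $t$, couple $s_i^{*t-1}$ and $s_i^{*t}$ so that the probability they disagree equals $\mathrm{TV}(s_i^{*t-1}, s_i^{*t})$. Under this coupling the realized sequence is deterministic conditional on the coupling randomness, the realized shift count $\tilde K_i$ satisfies $\En[\tilde K_i] = \sum_{t\geq 2}\mathrm{TV}(s_i^{*t-1}, s_i^{*t}) = \En[K_i]$, and $\En[\cost_i^t(s_i^{*t}, s_{-i}^t)]$ is unchanged because the marginal distribution of $s_i^{*t}$ is preserved. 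Applying the deterministic argument along each coupled realization and then taking expectation over the coupling randomness together with the turnover randomness yields the stated bound with $\En[\sum_i K_i]$ in place of $\sum_i K_i$.

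Everything else is routine: the smoothness step is pointwise in the realizations, and the rearrangement by $(1-\mu-\epsilon)$ is valid whenever $\epsilon < 1-\mu$, which is the implicit regime throughout Section~\ref{sec:static}. The only quantitative care needed is to keep the $\En[\opt^t]$ on the right-hand side, since $\opt^t$ is itself random through the turnover process.
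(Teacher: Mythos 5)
Your proof is correct and follows the same route as the paper: apply the shifting-experts Low Approximate Regret bound \eqref{eq:lar_shifting} per player with the stable near-optimal sequence $s_i^{*1:T}$ as comparator, sum over players, invoke $(\lambda,\mu)$-smoothness at each round together with $C(s^{*t})\le\rho\,\opt^t$, and rearrange by $(1-\mu-\epsilon)$ before taking the outer expectation. The only divergence is in handling the total-variation-valued $K_i$ for randomized benchmarks: where you build a maximal coupling so that the expected number of realized shifts equals $\sum_{t}\mathrm{TV}(s_i^{*t-1},s_i^{*t})$ while preserving marginals, the paper instead invokes (Appendix \ref{turnover_results_dynamic_games}, citing \cite{LykourisST16}) that \eqref{eq:lar_shifting} implies the total-variation version \eqref{eq:lar_tv} and applies that directly --- your coupling is essentially the proof of that cited implication, so the two arguments coincide.
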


To claim a price of anarchy bound, we need to ensure that the additive term in \eqref{eq:dynamic_efficiency} is a small fraction of the optimal cost. The challenge is that high turnover probability 
reduces stability, increasing $\En\brk*{\sum_i K_i}$. By using algorithms with smaller $A(d,T)$, we can allow for higher $\En\brk*{\sum_i K_i}$ and hence higher turnover probability. Combining Noisy Hedge with Proposition \ref{thm:poa_dynamic} 
strengthens the results in \cite{LykourisST16} by both weakening the behavioral assumption on the players, allowing them to use simpler learning algorithms, and allowing a higher turnover probability.

\paragraph{Comparison to Previous Results}
\cite{LykourisST16} use the more complex  AdaNormalHedge algorithm of \cite{Luo2015} which satisfies the adaptive regret property of \cite{Hazan2009}, but has $O(dT)$ space complexity. In contrast, Noisy Hedge only requires space complexity of just $O(d)$. Moreover, a broader class of algorithms satisfy the Low Approximate Regret property which makes the efficiency guarantees more prescriptive since this property serves as a behavioral assumption. Finally, the guarantees we provide improve on the turnover probability that can be accommodated. We provide further discussion in Appendix 
\ref{turnover_results_dynamic_games}.




\paragraph{Acknowledgements} We thank Vasilis Syrgkanis for sharing  simulation software and the NIPS reviewers for pointing out that the GREEN algorithm \cite{Allenberg2006} satisfies the Low Approximate Regret property.

\newpage

{\small
\bibliographystyle{alpha}
\bibliography{bib1}
}
\newpage
\appendix
\section*{APPENDIX}
\setcounter{section}{0}

\section{Supplementary material for Section \ref{sec:static}}
\subsection{Proof of Proposition \ref{prop:poa_full_info-hp}}
\label{app:whp}
\paragraph{Overview}
The core of Proposition \ref{prop:poa_full_info-hp} is Lemma \ref{theorem:full_info_high_prob}, which shows that the regret of an individual player concentrates around its expectation. To prove the high probability efficiency result of Proposition \ref{prop:poa_full_info-hp} we simply apply this lemma to the individual players, apply the union bound to get a regret statement that holds for all players simultaneously, then finally apply the same smoothness argument used in the expectation case.

\begin{lemma}[High-probability regret bound]
\label{theorem:full_info_high_prob}
Let $w^t\in\Delta(d)$ be selected by an algorithm satisfying the Low Approximate Regret property \eqref{eq:lar} for $\eps>0$ given costs $c^t$ selected by an adaptive adversary, and let $s^t\sim{}w^t$ be the algorithm's realized action. 
Then for all $\delta\in(0,\min\crl*{1,n\log{}_2{}T/e})$ and $T\geq{}4$, with probability at least $1-\delta$,
\begin{equation}
\label{eq:tdh_high_prob}
(1-\gamma)\sum_{t=1}^{T}\tri{s^t,c^t}\leq{} \sum_{t=1}^{T}\tri*{s^{\star},c^t} + \frac{4A(d,T)}{\gamma} + \frac{12\log(\log_2(T)/\delta)}{\gamma},
\end{equation}
where $\eps=\gamma/(2-\gamma)$.
\end{lemma}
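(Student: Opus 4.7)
The plan is to combine the expected Low Approximate Regret bound with a Bernstein-type martingale concentration inequality, converting the expected per-round cost $\tri{w^t, c^t}$ appearing in \eqref{eq:lar} into the realized cost $\tri{s^t, c^t}$ at the price of only a small multiplicative factor. Concretely, I would set $X_t := \tri{s^t, c^t} - \tri{w^t, c^t}$ where $s^t \sim w^t$ and $c^t$ is adapted to $\mathcal{F}_{t-1}$. Then $\{X_t\}$ is a martingale difference sequence with $\abs{X_t} \le 1$, and because the costs lie in $[0,1]$ the conditional second moment satisfies
\[
\sigma_t^2 := \En_{t-1}[X_t^2] \le \En_{t-1}[\tri{s^t,c^t}^2] \le \tri{w^t, c^t}.
\]
Thus the total conditional variance $V_T = \sum_t \sigma_t^2$ is upper bounded by $\sum_t \tri{w^t, c^t}$, the expected cumulative cost of the learner.

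Next I would apply Freedman's inequality with a standard variance-peeling/stratification device: since $V_T$ is random in $[0,T]$, I would dyadically partition its possible values into $\lceil \log_2 T\rceil$ buckets, apply Freedman on each, and union bound. This produces, with probability at least $1-\delta$,
\[
\sum_t X_t \;\le\; C_1 \sqrt{\brk*{\sum_t \tri{w^t, c^t}} \cdot L} \;+\; C_2 L, \qquad L := \log(\log_2(T)/\delta),
\]
for absolute constants $C_1, C_2$. The peeling is precisely what is responsible for the $\log(\log_2 T/\delta)$ factor appearing in the statement.

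The third step is to convert this additive/square-root concentration into a multiplicative statement via AM-GM: applying $\sqrt{ab} \le \beta a + b/(4\beta)$ with $\beta = \eps$ yields
\[
\sum_t \tri{s^t, c^t} \;\le\; (1 + C_1\eps)\sum_t \tri{w^t, c^t} + O(L/\eps).
\]
Combining with the Low Approximate Regret guarantee $(1-\eps)\sum_t \tri{w^t, c^t} \le \sum_t \tri{s^\star, c^t} + A(d,T)/\eps$, dividing through and collecting terms produces
\[
\sum_t \tri{s^t, c^t} \;\le\; \frac{1+C_1\eps}{1-\eps}\sum_t \tri{s^\star, c^t} + O\!\prn*{\frac{A(d,T)}{\eps} + \frac{L}{\eps}}.
\]
Finally, setting $\gamma := 2\eps/(1+\eps)$ (equivalently $\eps = \gamma/(2-\gamma)$) gives $(1-\eps)/(1+\eps) = 1-\gamma$, so after multiplying through by $1-\gamma$ and absorbing the $C_1$ factor into the constants in front of $A(d,T)$ and $L$, one reaches the stated inequality \eqref{eq:tdh_high_prob}.

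\textbf{Main obstacle.} The conceptual step -- exchanging expected for realized cost via Freedman plus AM-GM -- is clean; the real work is numerical. One must choose the Freedman constants $C_1, C_2$ carefully so that the AM-GM step ($\beta = \eps$) leaves a leading coefficient of exactly $1/(1-\gamma)$ on $\sum_t \tri{s^\star, c^t}$ and coefficients of $4/\gamma$ and $12/\gamma$, respectively, on $A(d,T)$ and $L$. The mild regime restrictions ($T \ge 4$ and $\delta$ bounded away from extremes) are exactly what is needed for the peeling union bound and the Freedman tail to give these explicit constants; tracking them is the fiddly portion of the proof.
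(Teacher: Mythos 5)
Your plan matches the paper's proof essentially step for step: the same martingale difference between realized and expected cost, the same variance bound $\bar{\sigma}^2\le\sum_t\tri{w^t,c^t}$ from $c^t\in[0,1]^d$, Freedman's inequality with the $\log_2(T)$ variance-peeling factor (the paper simply invokes the refinement of \cite{Bartlett2008High} rather than redoing the peeling), AM-GM to make the deviation multiplicative, a second application of \eqref{eq:lar} to replace $\sum_t\tri{w^t,c^t}$ by the comparator term, and the reparameterization $\eps=\gamma/(2-\gamma)$. The only remaining work is the constant-tracking you already flag, which the paper handles by choosing the AM-GM parameter $\eps'=\eps$ and weakening $(1-\eps)/(1+\eps-\eps^2)$ to $(1-\eps)/(1+\eps)=1-\gamma$.
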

Before proving Lemma \ref{theorem:full_info_high_prob} we restate a refinement of Freedman's martingale Bernstein inequality due to \cite{Bartlett2008High} which is a standard tool for proving high-probability versions of data-dependent regret bounds.
\begin{lemma}[\cite{Bartlett2008High}]
\label{theorem:freedman}
Let $X_{1},\ldots,X_{T}$ be a martingale difference sequence with $\abs{X_{t}}\leq{}b$. Let $\bar{\sigma}^{2}=\sum_{t=1}^{T}\textrm{Var}(X_{t}\mid{}X_1,\ldots,X_{t-1})$ be the sum of conditional variances for a particular outcome $X_1,\ldots,X_T$. For all $\delta\in(0,1/e)$, $T\geq{}4$ we have{\small
\begin{equation}
\label{eq:refined_freedman}
\Pr\prn*{\sum_{t=1}^{T}X_{t} > 4\sqrt{\bar{\sigma}^{2}\log(1/\delta)} + 2b\log(1/\delta)} \leq{} \log_{2}(T)\delta.
\end{equation}}
\end{lemma}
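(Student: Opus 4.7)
The strategy is to treat the realized cost $\tri{s^t,c^t}$ as a noisy version of its conditional expectation $\tri{w^t,c^t}$, bound the deviation using Freedman's inequality (Lemma \ref{theorem:freedman}), and then combine with the in-expectation Low Approximate Regret guarantee \eqref{eq:lar} via an AM--GM step that folds the resulting square-root deviation back into a multiplicative factor. This mirrors the ``small loss'' conversion already used in Example \ref{ex:small_loss}, but now applied at the level of concentration rather than regret.

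\textbf{Step 1: martingale decomposition.} Let $\mathcal{F}_{t-1}$ denote the history up through round $t-1$ and set $X_t = \tri{s^t,c^t} - \tri{w^t,c^t}$. Since $s^t\sim w^t$ conditionally on $\mathcal{F}_{t-1}$, the sequence $\{X_t\}$ is a martingale difference sequence with $|X_t|\le 1$. Because $\tri{s^t,c^t}\in[0,1]$, the conditional variance satisfies
\[
\mathrm{Var}(X_t\mid\mathcal{F}_{t-1}) \le \En\brk*{\tri{s^t,c^t}^2\mid\mathcal{F}_{t-1}} \le \En\brk*{\tri{s^t,c^t}\mid\mathcal{F}_{t-1}} = \tri{w^t,c^t},
\]
so writing $L\coloneqq\sum_{t=1}^T\tri{w^t,c^t}$, one has $\bar\sigma^2\le L$.

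\textbf{Step 2: apply Freedman.} Invoking Lemma \ref{theorem:freedman} with failure probability $\delta/\log_2(T)$ (which is the reason the logarithmic factor $\log(\log_2(T)/\delta)$ appears in the final bound), we obtain with probability at least $1-\delta$ that
\[
\sum_{t=1}^T\tri{s^t,c^t} \le L + 4\sqrt{L\cdot M} + 2M, \qquad M\coloneqq\log(\log_2(T)/\delta).
\]
Apply AM--GM in the form $4\sqrt{LM}\le \beta L + 4M/\beta$ with $\beta=\eps$, yielding $\sum_t\tri{s^t,c^t}\le (1+\eps)L + (4/\eps+2)M$.

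\textbf{Step 3: plug in Low Approximate Regret and solve for $\gamma$.} The hypothesis \eqref{eq:lar} gives $L\le \tfrac{1}{1-\eps}\bigl(\sum_t\tri{s^\star,c^t}+A(d,T)/\eps\bigr)$. Substituting and using the algebraic identity $\tfrac{1+\eps}{1-\eps}=\tfrac{1}{1-\gamma}$ for $\gamma = 2\eps/(1+\eps)$ (equivalently $\eps=\gamma/(2-\gamma)$), one arrives at
\[
(1-\gamma)\sum_{t=1}^T\tri{s^t,c^t}\le \sum_{t=1}^T\tri{s^\star,c^t} + \frac{A(d,T)}{\eps} + (1-\gamma)(4/\eps+2)M.
\]
Finally, using $1/\eps=(2-\gamma)/\gamma\le 2/\gamma$, the additive terms are bounded by $2A(d,T)/\gamma$ and $(8-2\gamma)M/\gamma\le 8M/\gamma$ respectively, which comfortably fit under the stated $4A(d,T)/\gamma$ and $12M/\gamma$.

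\textbf{Main obstacle.} The calculation itself is routine; the only subtle point is the choice of the AM--GM parameter $\beta$. The exponent bookkeeping must be done so that the multiplicative blow-up $(1+\eps)/(1-\eps)$ collapses exactly into the desired $1/(1-\gamma)$ factor with $\gamma=2\eps/(1+\eps)$. Choosing $\beta=\eps$ is what makes these constants line up; any other choice would either produce a worse multiplicative factor on the benchmark or leave an unsimplified $\eps$-$\gamma$ mismatch. A secondary bookkeeping detail is the rescaling of the failure probability by $\log_2(T)$ in the invocation of Lemma \ref{theorem:freedman}, which is exactly what produces the $\log(\log_2(T)/\delta)$ appearing in \eqref{eq:tdh_high_prob}.
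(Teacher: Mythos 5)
Your proposal does not prove the statement in question. The statement is the refined Freedman inequality itself (Lemma \ref{theorem:freedman}): a purely probabilistic concentration bound for an arbitrary bounded martingale difference sequence, asserting $\Pr\prn*{\sum_t X_t > 4\sqrt{\bar\sigma^2\log(1/\delta)} + 2b\log(1/\delta)}\le \log_2(T)\delta$. What you have written is instead a proof of the downstream application, Lemma \ref{theorem:full_info_high_prob} (the high-probability Low Approximate Regret bound): in your Step 2 you explicitly \emph{invoke} Lemma \ref{theorem:freedman} as a black box, which is precisely the statement you were asked to establish. Assuming the target statement and deriving a consequence of it is not a proof of the target statement, so as it stands there is no argument at all for \eqref{eq:refined_freedman}.

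For context: the paper itself does not prove this lemma either; it is quoted from \cite{Bartlett2008High}. An actual proof requires the standard Freedman/Bernstein machinery — one shows that $\exp\prn*{\lambda\sum_{t\le\tau} X_t - c\lambda^2 \sum_{t\le\tau}\mathrm{Var}(X_t\mid \mathcal{F}_{t-1})}$ is a supermartingale for suitable $c$ and $\lambda\le 1/b$, which yields a tail bound conditional on a deterministic upper bound for $\bar\sigma^2$. Because $\bar\sigma^2$ here is a random quantity (the sum of conditional variances along the realized path), one must then stratify its possible values into roughly $\log_2(T)$ dyadic ranges, apply the fixed-variance bound on each range with an optimized $\lambda$, and take a union bound; this peeling step is exactly where the multiplicative $\log_2(T)$ factor on $\delta$ in \eqref{eq:refined_freedman} comes from. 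None of this appears in your proposal. (Separately, your argument for the high-probability regret bound does track the paper's proof of Lemma \ref{theorem:full_info_high_prob} quite closely — same martingale decomposition, same variance bound $\bar\sigma^2\le\sum_t\tri{w^t,c^t}$, same AM--GM fold-in — but that is a different statement.)
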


\begin{proof}[Proof of Lemma \ref{theorem:full_info_high_prob}]
Let $Z^{t}(s^1,\ldots,s^{t}) = (1-\epsilon)\tri{s^t, c^t}$
be a random process indexed by $t\in\brk{T}$.
We leave the dependence of $c^{t}$ on $s^1,\ldots,s^{t-1}$ implicit. 
Let $X^t(s^1,\ldots,s^t)=Z^t(s^1,\ldots,s^t) - \En\brk*{Z^t\mid{}s^1,\ldots,s^{t-1}}$ be the associated martingale difference sequence. 
Note that $\abs{X^t}\leq{}(1-\epsilon)\leq{}1$ and that $\En\brk{s^t\mid{}s^1,\ldots,s^{t-1}} = w^t$.

Lemma \ref{theorem:freedman} applied to $\sum_{t=1}^{T}X^t(s^1,\ldots,s^t)$ and the Low Approximate Regret property \eqref{eq:lar} now imply that for a given draw of $s^1,\ldots,s^{T}$, with probability at least $1-\delta$,
\begin{align*}
(1-\epsilon)\sum_{t=1}^{T}\tri{s^t,c^t} =\sum_{t=1}^{T}Z_t &\leq{} \sum_{t=1}^{T}\En\brk*{Z^t\mid{}s^1,\ldots,s^{t-1}}+ 4\sqrt{\bar{\sigma}^{2}\log(\log_2(T)/\delta)} + 2\log(\log_2(T)/\delta)\\
&= (1-\eps)\sum_{t=1}^{T}\tri*{w^{t},c^t}+ 4\sqrt{\bar{\sigma}^{2}\log(\log_2(T)/\delta)} + 2\log(\log_2(T)/\delta)\\
&\leq \sum_{t=1}^{T}\tri*{s^{\star},c^t}+\frac{A(d,T)}{\eps}+ 4\sqrt{\bar{\sigma}^{2}\log(\log_2(T)/\delta)} + 2\log(\log_2(T)/\delta),
\end{align*}
where $s^{\star}=\argmin_{i\in\brk{d}}\sum_{t=1}^{T}\tri*{e_i,c^t}$.
To complete this bound we must provide a bound on the conditional variance $\bar{\sigma}^{2}$. To this end note that
\begin{align*}
\bar{\sigma}^{2} &= \sum_{t=1}^{T}\En\brk*{\prn*{X_s^t}^2\mid{}s^1,\ldots,s^{t-1}}\\ &= \prn*{1-\epsilon}^{2}\sum_{t=1}^{T}\En\brk*{\prn*{\tri{s^t,c^t} - \tri{w^t, c^t}}^2\mid{}s^1,\ldots,s^{t-1}}.
\intertext{Now, since the mean minimizes the squared error:}
&\leq \prn*{1-\epsilon}^{2}\sum_{t=1}^{T}\En\brk*{\prn*{\tri{s^t,c^t}}^2\mid{}s^1,\ldots,s^{t-1}}.
\end{align*}
Since $c^t\in\brk{0,1}^{d}$ we have
\begin{align*}
\bar{\sigma}^2&\leq \prn*{1-\epsilon}^{2}\sum_{t=1}^{T}\En\brk*{\tri{s^t,c^t}\mid{}s^1,\ldots,s^{t-1}}\\
&= \prn*{1-\epsilon}^{2}\sum_{t=1}^{T}\tri{w^t, c^t}.
\end{align*}
Hence, with probability at least $1-\delta$,
\begin{align*}
&\prn*{1-\epsilon}\sum_{t=1}^{T}\tri{s^t,c^t} - \sum_{t=1}^{T}\tri*{s^{\star},c^t}\\ &\leq{} \frac{A(d,T)}{\epsilon} + 4\sqrt{\prn*{1-\epsilon}^{2}\prn*{\sum_{t=1}^{T}\tri{w^t, c^t}}\log\prn*{\log_2(T)/\delta}} + 2\log\prn*{\log_2(T)/\delta}.\\
\intertext{Now for all $\eps'>0$ by the AM-GM inequality we have:}
&\leq{} \frac{A(d,T)}{\epsilon} + \eps'\prn*{1-\epsilon}^{2}\prn*{\sum_{t=1}^{T}\tri{w^t, c^t}} + 4\log\prn*{\log_2(T)/\delta}/\eps' + 2\log\prn*{\log_2(T)/\delta},
\intertext{and so the Low Approximate Regret property \eqref{eq:lar} implies}
&\leq{} \frac{A(d,T)}{\epsilon} + \eps'(1-\epsilon)\brk*{\sum_{t=1}^{T}\tri*{s^{\star},c^t} + \frac{A(d,T)}{\eps}
}+ 4\log\prn*{\log_2(T)/\delta}/\eps' + 2\log\prn*{\log_2(T)/\delta}.\\
\end{align*}
Rearranging,
\begin{align*}
&\frac{(1-\epsilon)}{(1+\eps'(1-\eps))}\sum_{t=1}^{T}\tri{s^t,c^t}\\
&\leq{}\sum_{t=1}^{T}\tri*{s^{\star},c^t}+ \frac{1}{1+\eps'(1-\eps)}\brk*{\frac{A(d,T)}{\epsilon} +
4\log\prn*{\log_2(T)/\delta}/\eps' + 2\log\prn*{\log_2(T)/\delta}} + \frac{A(d,T)}{\eps}.
\end{align*}
Taking $\eps'=\eps$  we have
\[
\frac{(1-\eps)}{1+\eps-\eps^{2}}\sum_{t=1}^{T}\tri{s^t,c^t}\leq{} \sum_{t=1}^{T}\tri*{s^{\star},c^t} + 2\frac{A(d,T)}{\eps} + \frac{6\log(\log_2(T)/\delta)}{\eps}.
\]
We simplify to a slightly weaker bound,
\[
\frac{(1-\eps)}{(1+\eps)}\sum_{t=1}^{T}\tri{s^t,c^t}\leq{} \sum_{t=1}^{T}\tri*{s^{\star},c^t} + 2\frac{A(d,T)}{\eps} + \frac{6\log(\log_2(T)/\delta)}{\eps}.
\]
Now setting $\eps = \gamma/(1-\gamma)$ we arrive at
\[
(1-2\gamma)\sum_{t=1}^{T}\tri{s^t,c^t}\leq{} \sum_{t=1}^{T}\tri*{s^{\star},c^t} + 2\frac{A(d,T)}{\gamma} + \frac{6\log(\log_2(T)/\delta)}{\gamma}.
\]

Finally, reparameterizing with $\gamma'=2\gamma$ we have

\[
(1-\gamma')\sum_{t=1}^{T}\tri{s^t,c^t}\leq{} \sum_{t=1}^{T}\tri*{s^{\star},c^t} + \frac{4A(d,T)}{\gamma'} + \frac{12\log(\log_2(T)/\delta)}{\gamma'}.
\]
\end{proof}

\subsection{Low Approximate Property for Specific Algorithms}
\label{app:lar_full}

In this section, we present the proofs of the Low Approximate Regret property for Hedge (Example \ref{thm:hedge_static}) and Optimistic Hedge (Example \ref{thm:optimistic_hedge_static}). The first proof is only included for completeness but may be helpful as subsequent proofs follow the same framework. Our proof for Optimistic Hedge includes a new analysis that relates the performance of Optimistic Hedge on a given cost sequence to the performance of Hedge on the same sequence. The analysis shows that Optimistic Hedge will experience low regret whenever Hedge has low cost, which in particular implies that it satisfies the Low Approximate Regret property.
We omit the proof for Example \ref{thm:hedge_tuned} and instead refer the reader to Corollary 2.4 of \cite{prediction_book}, which derives the result using the doubling trick.

\subsubsection{Hedge (Example
\ref{thm:hedge_static})}\label{app:proof_hedge_static}
Hedge is an algorithm for online linear optimization over the simplex $\Delta(d)$. It has update rule
\[
w^{t+1}_{i} \propto w^{t}_{i}e^{-\eta c^t_{i}}\quad\forall{}i\in\brk{d},
\] where $\eta>0$ is the \emph{learning rate}.

We derive Hedge as an instance of Online Mirror Descent (see e.g. \cite{Hazan_book}) with the \emph{negative entropy regularizer} $R(w)=\sum_{i=1}^d w_i \log(w_i)$.
To run Online Mirror Descent one picks a learning rate $\eta>0$ and initial weights (also known as a \emph{prior}) $w^1$, then performs the following update step at each time $t\in\brk{T}$:
\begin{enumerate}
\item Let $\widetilde{w}^t$ satisfy $\nabla R(\widetilde{w}^{t+1})=\nabla R(w^t)-\eta c^t$.
\item $w^{t+1}=\argmin_{f\in\Delta(d)} D_R(f|\widetilde{w}^{t+1})$.
\end{enumerate}
Here $D_R(f|g)\defeq{}R(f)-R(g)-\tri*{\nabla R(g),f-g}$ is the \emph{Bregman divergence} for the regularizer $R$. We briefly restate some useful properties of the Mirror Descent update.
\begin{lemma}[Properties of Mirror Descent (e.g. \cite{Hazan_book})]\label{lem:property_of_projection}
For any convex regularizer $R$ we have
\begin{itemize}
\item $\breg(f\mid{}g)\geq{}0$.
\item For any $a,b,c\in\Delta(d)$,
\[
\tri*{b-a,\nabla{}R(c)-\nabla{}R(b)}=D_{R}(a\mid{}c) - D_{R}(a\mid{}b)-D_{R}(b\mid{}c).
\]
\item The Mirror Descent update can alternatively be expressed as
\[
w^{t+1} = \argmin_{f\in\Delta(d)}\eta{}\tri{f,c^t}+\breg\prn*{f\mid{}w^t}.
\]
\item Any update of the form $f^* = \argmin_{f\in\Delta(d)}\tri*{f,c}+\breg\prn*{f\mid{}w}$ satisfies
\begin{equation*}
\tri*{f^*-g,c}\leq \breg\prn*{g\mid{}w}-\breg\prn*{g\mid{}f^*}-\breg\prn*{f^*\mid{}w}\quad\forall{}g\in\Delta(d).
\end{equation*}
\end{itemize}
\end{lemma}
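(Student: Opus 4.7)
The plan is to prove each of the four properties in turn using standard convex analysis, leveraging properties (i)--(iii) as stepping stones for property (iv). None of the four is deep; the proposal is really a sequencing of routine manipulations.

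For property (i), I would note that by definition $\breg(f\mid g) = R(f) - R(g) - \tri*{\nabla R(g), f-g}$ is exactly the gap between $R(f)$ and its first-order Taylor expansion at $g$. Convexity of $R$ then yields $\breg(f\mid g)\geq 0$ immediately.

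For property (ii), I would simply expand each of the three Bregman divergences on the right-hand side using the definition and collect terms. The $R(a)$, $R(b)$, $R(c)$ terms cancel in pairs, and what remains is $-\tri*{\nabla R(c),a-c} + \tri*{\nabla R(b),a-b} + \tri*{\nabla R(c),b-c}$. Regrouping gives $\tri*{\nabla R(c),b-a}-\tri*{\nabla R(b),b-a} = \tri*{b-a,\nabla R(c)-\nabla R(b)}$, which is the claim.

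For property (iii), I would show that the two-step update collapses to the one-step regularized minimization. The cleanest route is to write $\breg(f\mid\widetilde w^{t+1}) = R(f) - R(\widetilde w^{t+1}) - \tri*{\nabla R(\widetilde w^{t+1}), f - \widetilde w^{t+1}}$ and substitute $\nabla R(\widetilde w^{t+1}) = \nabla R(w^t)-\eta c^t$. Up to a constant independent of $f$, this becomes $R(f) - \tri*{\nabla R(w^t), f} + \eta\tri*{c^t,f} = \breg(f\mid w^t) + \eta\tri*{c^t,f} + \text{const}$. Thus the projection in step 2 of the Mirror Descent definition is exactly the argmin in the stated form.

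For property (iv), I would use first-order optimality of $f^*$ over the convex set $\Delta(d)$: for every $g\in\Delta(d)$, $\tri*{c + \nabla R(f^*) - \nabla R(w),\, g - f^*}\geq 0$, i.e.\ $\tri*{f^*-g,c}\leq \tri*{f^*-g,\nabla R(w)-\nabla R(f^*)}$. Applying the three-point identity (ii) with $a=g$, $b=f^*$, $c=w$ rewrites the right-hand side exactly as $\breg(g\mid w) - \breg(g\mid f^*) - \breg(f^*\mid w)$, giving the inequality. The only step that requires any care is handling the boundary of $\Delta(d)$ in the first-order condition; I would address this by noting that the standard variational inequality for minimization of a differentiable convex function over a convex set gives precisely the form used above, so nothing more is needed. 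The main potential obstacle is purely expository: making sure the constant offsets discarded in (iii) and the sign conventions in (ii) and (iv) are tracked consistently, since the lemma is stated as a set of identities rather than inequalities up to constants.
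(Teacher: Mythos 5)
Your proof is correct: all four items are verified by exactly the standard arguments (first-order convexity for nonnegativity, direct expansion for the three-point identity, matching $f$-dependent terms for the equivalence of the two-step and proximal forms, and the variational inequality plus the three-point identity for the last item). The paper itself gives no proof of this lemma --- it is stated as a known fact with a citation to a textbook --- so there is nothing to compare against beyond noting that your argument is the canonical one found there.
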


\begin{proposition}\label{prop:hedge}
\textit{Hedge, when run with constant learning rate and uniform prior $w^{1}_{i}=1/d$, satisfies the Low Approximate Regret property with $A(d,T)=\log(d)$.}
\end{proposition}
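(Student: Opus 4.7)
The plan is to prove the standard small-loss regret bound for Hedge and then apply the folding-into-the-comparator trick highlighted in Example~\ref{ex:small_loss} to convert it into the Low Approximate Regret form. Since Hedge is the instance of Online Mirror Descent with negative entropy regularizer $R(w)=\sum_i w_i\log w_i$, for which $D_R$ is the KL divergence, the fourth bullet of Lemma~\ref{lem:property_of_projection} applied at each round $t$ yields, for every $f\in\Delta(d)$,
\[
\eta\tri{w^{t+1}-f,c^t}\le D_R(f\|w^t)-D_R(f\|w^{t+1})-D_R(w^{t+1}\|w^t).
\]
Adding $\eta\tri{w^t-w^{t+1},c^t}$ to both sides, summing over $t$, and telescoping the first two Bregman terms (which collapse to at most $D_R(f\|w^1)\le\log d$ under the uniform prior $w^1_i=1/d$) reduces the task to bounding the per-round stability quantity $\eta\tri{w^t-w^{t+1},c^t}-D_R(w^{t+1}\|w^t)$.

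The central calculation is to show that this stability is at most $\frac{\eta^2}{2}\tri{w^t,(c^t)^{\odot 2}}$, where $(c^t)^{\odot 2}$ denotes the entry-wise square of $c^t$. I would compute directly from the Hedge update $w^{t+1}_i=w^t_i e^{-\eta c^t_i}/Z^t$ to obtain $D_R(w^{t+1}\|w^t)=-\eta\tri{w^{t+1},c^t}-\log Z^t$, so the stability simplifies after cancellation to $\eta\tri{w^t,c^t}+\log Z^t$. Then I would apply the pointwise bound $e^{-y}\le 1-y+y^2/2$ for $y\in[0,1]$ together with $\log(1+z)\le z$ to obtain $\log Z^t\le -\eta\tri{w^t,c^t}+\frac{\eta^2}{2}\tri{w^t,(c^t)^{\odot 2}}$, which delivers the per-round bound. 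Summing over $t$ yields the small-loss regret bound
\[
\sum_{t=1}^T\tri{w^t-f,c^t}\le\frac{\log d}{\eta}+\frac{\eta}{2}\sum_{t=1}^T\tri{w^t,(c^t)^{\odot 2}}.
\]

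The final step exploits $c^t\in[0,1]^d$, which gives $(c^t_i)^2\le c^t_i$ and hence $\tri{w^t,(c^t)^{\odot 2}}\le\tri{w^t,c^t}$. Substituting and rearranging produces $(1-\eta/2)\sum_t\tri{w^t,c^t}\le\sum_t\tri{f,c^t}+\log d/\eta$, and since $1-\eta\le 1-\eta/2$, choosing $\eta=\epsilon$ delivers $(1-\epsilon)\sum_t\tri{w^t,c^t}\le\sum_t\tri{f,c^t}+\log d/\epsilon$, i.e.\ the Low Approximate Regret property with $A(d,T)=\log d$ as claimed. The main obstacle I anticipate is the per-round stability calculation itself; any crude replacement for $e^{-y}\le 1-y+y^2/2$ (say $e^{-y}\le 1-y+y^2$, valid on $[0,1]$) would only degrade the numerical constant in front of $\log d$ by a small factor, so the argument is robust to the precise elementary inequality used and needs no tools beyond Lemma~\ref{lem:property_of_projection}.
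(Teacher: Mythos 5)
Your proof is correct and follows essentially the same route as the paper's: both run the standard Online Mirror Descent analysis for Hedge with the entropy regularizer, extract a second-order term bounded by a multiple of $\eta\sum_t\tri{w^t,c^t}$ using $c^t\in[0,1]^d$, and then fold that term into the comparator by rearranging. The only difference is cosmetic: the paper bounds the per-round stability via the unnormalized point $\widetilde{w}^{t+1}$ and the inequality $1-e^{-\eta x}\le\eta x$, whereas you bound it via the log-partition function $\log Z^t$ and $e^{-y}\le 1-y+y^2/2$, which yields a marginally better constant ($1-\eta/2$ in place of $1-\eta$) before you weaken it back to match the claim.
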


\begin{proof}[Proof of Proposition \ref{prop:hedge}]
Using the standard Online Mirror Descent analysis we have that at every step $t$, for any $f\in \Delta(d)$:
\begin{align} 
    \tri{w^t-f,c^t}&\le \tri{w^t-\widetilde{w}^{t+1},c^t}+ \frac{1}{\eta} \prn*{D_R(f|w^t)-D_R(f|w^{t+1}) - D_R(\wt^{t+1}|w^t)} \notag \\
    & \le  \tri{w^t-\widetilde{w}^{t+1},c^t}+ \frac{1}{\eta} \prn*{D_R(f|w^t)-D_R(f|w^{t+1}) }. \label{eq:hedge_per_step}
\end{align}
For the first term in the sum above, we have:
\begin{equation}\label{eq:hedge_per_step_first_term}
\tri{w^t-\widetilde{w}^{t+1},c^t}\leq \eta \tri{w^t,c^t}
\end{equation}
To see this note that $\nabla R(w)=\log(w)+1$ (where $\log$ is applied element-wise) and hence $(\nabla R)^{-1}(f)=e^{f-1}$. This 
implies $\widetilde{w}_i^{t+1}=w_i^t e^{-\eta c_i^t}$, and so
\begin{equation}\label{eq:hedge_loss_place}\tri{w^t-\widetilde{w}^{t+1},c^t}=\sum_{j\in[d]} w_j^t c_j^t(1-e^{-\eta c_j^t})  \leq \eta\sum_{j\in[d]} w_j^t(c_j^t)^2\leq \eta \tri{w^t,c^t}.
\end{equation}
The first inequality in \eqref{eq:hedge_loss_place} uses  that $1-e^{-\eta x}\leq \eta x$ for $x>0$ and the second inequality uses that the losses lie in $[0,1]$.

Using relations \eqref{eq:hedge_per_step} and \eqref{eq:hedge_per_step_first_term}, and summing over $t$:
\begin{equation}
    \sum_t\tri{w^t-f,c^t}\leq  \eta\sum_t \tri{w^t,c^t}+\frac{1}{\eta}D_R(f|w^1).
\end{equation}
Since $w^1$ is the uniform distribution, $D_R(f|w^1)\leq \log(d)$. Rearranging yields the claimed result.
\end{proof}

\subsubsection{Optimistic Hedge
(Example \ref{thm:optimistic_hedge_static})}\label{app:proof_optimistic_hedge_static}

The Optimistic Hedge algorithm performs two separate weight updates at each timestep to produce its action distribution. The method first performs a Hedge update $g_i^{t+1}\propto g_i^{t}e^{-\eta c_i^t}$, then produces the prediction distribution: $w_i^{t+1}\propto g_i^{t+1}e^{-\eta c_i^{t}}$.

\begin{lemma}
\label{lem:optimistic_hedge}
\textit{Optimistic Hedge with a constant learning rate $\eta=\epsilon/8<1/4$ satisfies the Low Approximate Regret property with $A(d,T)=8\log(d)$.}
\end{lemma}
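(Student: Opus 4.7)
The plan is to reduce the analysis of Optimistic Hedge to the already-established Low Approximate Regret bound for vanilla Hedge (Proposition~\ref{prop:hedge}) by exploiting the fact that the played distribution $w^t$ is obtained from the Hedge-style auxiliary sequence $g^t$ by a single additional exponential tilt: $w^t_i \propto g^t_i e^{-\eta c^{t-1}_i}$. Since $e^{-\eta c^{t-1}_i}$ lies coordinatewise in $[1-\eta,1]$ when $c^{t-1}_i\in[0,1]$, this tilt cannot inflate expected costs much, and I will show that $\langle w^t, c^t\rangle$ is bounded by a small multiplicative factor times $\langle g^t, c^t\rangle$. This matches the informal claim in the appendix preamble that Optimistic Hedge's regret is controlled by Hedge's net cost.

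The first step is the pointwise bound. Writing $w^t_i = g^t_i e^{-\eta c^{t-1}_i}/Z^t$ with $Z^t = \sum_j g^t_j e^{-\eta c^{t-1}_j}$, I would upper-bound the numerator of $\langle w^t,c^t\rangle$ using $e^{-\eta c^{t-1}_i}\le 1$ to get $\sum_i g^t_i c^t_i e^{-\eta c^{t-1}_i}\le \langle g^t,c^t\rangle$, and lower-bound the denominator using $e^{-x}\ge 1-x$ together with $c^{t-1}_j\in[0,1]$ and $g^t\in\Delta(d)$ to get $Z^t \ge 1-\eta\langle g^t,c^{t-1}\rangle \ge 1-\eta$. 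This yields the per-step bound $\langle w^t,c^t\rangle \le \langle g^t,c^t\rangle/(1-\eta)$.

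The second step is to apply Proposition~\ref{prop:hedge} directly to the sequence $g^t$, which is literally vanilla Hedge run on cost sequence $c^1,\ldots,c^T$ with learning rate $\eta$; this yields $(1-\eta)\sum_t\langle g^t,c^t\rangle \le \sum_t\langle f,c^t\rangle + \log(d)/\eta$ for every $f\in\Delta(d)$. Chaining the per-step bound with Hedge's LAR gives
\[
(1-\eta)^2 \sum_t \langle w^t,c^t\rangle \;\le\; \sum_t \langle f,c^t\rangle + \frac{\log d}{\eta}.
\]
Choosing $\eta = \epsilon/8$, I would then verify $(1-\eta)^2 = 1 - \epsilon/4 + \epsilon^2/64 \ge 1-\epsilon$ and $\log(d)/\eta = 8\log(d)/\epsilon$, so the bound rearranges to exactly $(1-\epsilon)\sum_t\langle w^t,c^t\rangle \le \sum_t\langle f,c^t\rangle + 8\log(d)/\epsilon$, matching the claimed parameters $A(d,T)=8\log(d)$.

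The main obstacle is the arithmetic bookkeeping in chaining two multiplicative slacks: one factor of $1/(1-\eta)$ from passing between $w^t$ and $g^t$, and a second implicit factor of $1/(1-\eta)$ from Hedge's own Low Approximate Regret bound. Absorbing both slacks into a single $(1-\epsilon)$ factor on the learner's cost while keeping $A(d,T)$ a small constant multiple of $\log d$ is what forces the hypothesis $\eta<1/4$ (which makes the denominator estimate $Z^t \ge 1-\eta$ nontrivial) and pins down the specific choice $\eta=\epsilon/8$ so that $(1-\eta)^2$ comfortably dominates $1-\epsilon$.
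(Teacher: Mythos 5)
Your proof is correct, but it takes a genuinely different route from the paper's. The paper proves Lemma \ref{lem:optimistic_hedge} by invoking the local-norm regret bound of Rakhlin--Sridharan (Lemma \ref{lem:predictable}), which controls Optimistic Hedge's regret by $2\eta\sum_t(\nrm{c^t-c^{t-1}}^{\star}_{w^t})^2+\log(d)/\eta$; it then bounds each squared local norm by $2\tri{w^t,c^t}+2\tri{g^{t-1},c^{t-1}}$ using Bregman-projection inequalities, and finally applies Hedge's Low Approximate Regret to the auxiliary sequence $g^t$. You bypass the local-norm machinery entirely: since $w^t$ is a single multiplicative tilt of $g^t$ by factors $e^{-\eta c^{t-1}_i}\in[e^{-\eta},1]$, you get the pointwise bound $\tri{w^t,c^t}\le\tri{g^t,c^t}/(1-\eta)$ (your numerator/denominator estimates are both valid, including at $t=1$ where $c^0=0$), and then LAR for Optimistic Hedge follows by chaining with LAR for Hedge; the arithmetic $(1-\epsilon/8)^2\ge 1-\epsilon$ checks out. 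What each approach buys: yours is shorter and more elementary, does not actually require the condition $\eta\nrm{c^t-c^{t-1}}_{\infty}\le 1/4$, and in fact would yield a better constant (e.g.\ $\eta=\epsilon/2$ gives $A(d,T)=2\log d$); the paper's route preserves the variation-based (RVU-type) form of the regret bound, which is what underlies Optimistic Hedge's fast convergence under expectation feedback and hence the ``best of both worlds'' claim in Section \ref{sec:static}. One small inaccuracy in your commentary: the hypothesis $\eta<1/4$ is not what makes $Z^t\ge 1-\eta$ nontrivial (that only needs $\eta<1$); it is inherited from the paper's use of the Rakhlin--Sridharan lemma and plays no role in your argument.
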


Let $R$ be the negative entropy regularizer as in the proof of Proposition \ref{prop:hedge}. Let $\nabla^2 R(w)$ denote the Hessian of the regularizer $R$. The \emph{local norm} with respect to $w$ is $\nrm{f}_{w}=\sqrt{f^T \nabla^2 R(w) f}$ and its dual norm is $\nrm{x}_{w}^{\star}=\sqrt{x^T(\nabla^2 R(w))^{-1}x}$. 
For the negative entropy regularizer this definition yields $\nrm{f}_{w}^{2}=\sum_{i\in\brk{d}}\frac{\prn*{f_i}^{2}}{w_i}$ and $(\nrm{x}_{w}^{\star})^2=\sum_{i\in\brk{d}}w_i(x_i)^{2}$.
We begin by restating an intermediate Lemma from \cite{RakhlinS13predictablesequences} that bounds the regret of Optimistic Hedge in terms of the local norm.
\begin{lemma}{(Lemma 3 in \cite{RakhlinS13predictablesequences})}\label{lem:predictable}
Optimistic Hedge enjoys for any $f\in \Delta(S)$
\begin{equation}
\label{eq:rs13}
\sum_{t=1}^T\tri{w^t-f,c^t}\leq 2\eta \sum_{t=1}^T (\nrm{c^t-c^{t-1}}_{w^t}^{\star})^2+\frac{\log(d)}{\eta}.
\end{equation}
as long as $\eta \nrm{c^t-c^{t-1}}_{\infty}\leq 1/4$ at every step.
\end{lemma}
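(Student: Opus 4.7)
The plan is to view Optimistic Hedge as Optimistic Mirror Descent with the negative entropy regularizer $R$ and carry out the standard two-step predictable-sequences analysis. Introduce the auxiliary iterate $g^{t+1}=\argmin_{g\in\Delta(d)}\eta\tri{g,c^t}+\breg(g\mid g^t)$ (a plain Hedge update on $g^t$), so that Optimistic Hedge's prediction takes the form $w^{t+1}=\argmin_{w\in\Delta(d)}\eta\tri{w,c^t}+\breg(w\mid g^{t+1})$. The instantaneous regret then decomposes as
\[
\tri{w^t-f,c^t}=\tri{g^{t+1}-f,c^t}+\tri{w^t-g^{t+1},c^{t-1}}+\tri{w^t-g^{t+1},c^t-c^{t-1}}.
\]

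I will bound the first two terms by invoking the mirror-descent one-step inequality (the fourth bullet of Lemma \ref{lem:property_of_projection}) twice: on the $g^{t+1}$-update with comparator $f$, and on the $w^t$-update with comparator $g^{t+1}$. Adding these, the summand $\breg(g^{t+1}\mid g^t)$ appears with opposite signs and cancels, yielding
\[
\tri{g^{t+1}-f,c^t}+\tri{w^t-g^{t+1},c^{t-1}}\leq \tfrac{1}{\eta}\bigl[\breg(f\mid g^t)-\breg(f\mid g^{t+1})\bigr]-\tfrac{1}{\eta}\breg(g^{t+1}\mid w^t)-\tfrac{1}{\eta}\breg(w^t\mid g^t).
\]
Summing over $t$ the first bracket telescopes, and initializing $g^1$ uniformly gives $\breg(f\mid g^1)\leq\log d$.

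For the remaining cross term I will apply Fenchel--Young in the pair of dual local norms $\nrm{\cdot}_{w^t}$ and $\nrm{\cdot}_{w^t}^\star$ induced by $\nabla^2 R(w^t)=\mathrm{diag}(1/w^t_i)$:
\[
\tri{w^t-g^{t+1},c^t-c^{t-1}}\leq 2\eta\,(\nrm{c^t-c^{t-1}}_{w^t}^\star)^2+\tfrac{1}{8\eta}\nrm{w^t-g^{t+1}}_{w^t}^2.
\]
The stability term $\tfrac{1}{8\eta}\nrm{w^t-g^{t+1}}_{w^t}^2$ must be absorbed into the non-positive $-\tfrac{1}{\eta}\breg(g^{t+1}\mid w^t)$ generated above, which reduces the entire proof to establishing the local strong-convexity estimate $\breg(g^{t+1}\mid w^t)\geq \tfrac18\nrm{g^{t+1}-w^t}_{w^t}^2$. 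Once this is in hand, summing and dropping the remaining non-positive $-\tfrac{1}{\eta}\breg(w^t\mid g^t)$ yields exactly $\sum_t\tri{w^t-f,c^t}\leq 2\eta\sum_t(\nrm{c^t-c^{t-1}}_{w^t}^\star)^2+\log(d)/\eta$.

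The main obstacle is the local-norm step, and here the learning-rate hypothesis enters. Second-order Taylor expansion of the negative entropy gives $\breg(g\mid w)=\tfrac12\sum_i(g_i-w_i)^2/\xi_i$ for some $\xi$ on the segment from $w$ to $g$, so it suffices to bound $\xi_i/w^t_i$ by a universal constant. Since $w^t$ and $g^{t+1}$ are both Hedge steps on the same base $g^t$ (against $c^{t-1}$ and $c^t$ respectively), a direct calculation gives $g^{t+1}_i/w^t_i=\mathbb{E}_{j\sim w^t}\bigl[e^{-\eta(c^t_j-c^{t-1}_j)}\bigr]\cdot e^{-\eta(c^t_i-c^{t-1}_i)}$, which under the assumption $\eta\nrm{c^t-c^{t-1}}_\infty\leq 1/4$ lies in $[e^{-1/2},e^{1/2}]$. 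Therefore $\xi_i\leq e^{1/2}w^t_i$ and $\breg(g^{t+1}\mid w^t)\geq \tfrac{1}{2\sqrt{e}}\nrm{g^{t+1}-w^t}_{w^t}^2\geq \tfrac18\nrm{g^{t+1}-w^t}_{w^t}^2$, which completes the argument.
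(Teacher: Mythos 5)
The paper does not actually prove this lemma --- it is quoted as Lemma 3 of \cite{RakhlinS13predictablesequences} and used as a black box --- so there is no internal proof to compare against; your argument is a correct, self-contained reconstruction of essentially the standard proof from that reference. All the key steps check out: the three-term decomposition, the two applications of the one-step mirror-descent inequality with cancellation of $D_R(g^{t+1}\mid g^t)$, the Fenchel--Young split of the cross term with constants $2\eta$ and $\frac{1}{8\eta}$, and the absorption of $\frac{1}{8\eta}\nrm{w^t-g^{t+1}}_{w^t}^2$ into $-\frac{1}{\eta}D_R(g^{t+1}\mid w^t)$ via the multiplicative closeness of $g^{t+1}$ and $w^t$ forced by $\eta\nrm{c^t-c^{t-1}}_\infty\le 1/4$. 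One small slip: the ratio should read $g^{t+1}_i/w^t_i = e^{-\eta(c^t_i-c^{t-1}_i)}\big/\,\En_{j\sim w^t}\bigl[e^{-\eta(c^t_j-c^{t-1}_j)}\bigr]$ (division by the normalizing expectation, not multiplication); since both that expectation and its reciprocal lie in $[e^{-1/4},e^{1/4}]$, your conclusion that the ratio lies in $[e^{-1/2},e^{1/2}]$ --- and hence the local strong-convexity constant $1/8$ --- is unaffected.
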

\begin{proof}[Proof of Lemma \ref{lem:optimistic_hedge}]
We will focus on the first term in the right-hand side of \eqref{eq:rs13} and prove that for all $t$,
\begin{equation}\label{eq:optimistic_auxiliary}
(\nrm{c^t-c^{t-1}}_{w^t}^{\star})^2\leq 2 \tri{w^t,c^t}+2 \tri{g^{t-1},c^{t-1}}.
\end{equation}
This holds as
\begin{align}
(\nrm{c^t-c^{t-1}}_{w^t}^{\star})^2 &\leq 2\big((\nrm{c^t}_{w^t}^{\star})^2 + (\nrm{c^{t-1}}_{w^t}^{\star})^2\big)\notag\\
& = 2\big(\sum_{j=1}^d w_j^t (c_j^t)^2+ \sum_{j=1}^d w_j^t (c_j^{t-1})^2)\big)\label{eq:oh1}\\
& \leq 2\big(\tri{w^t,c^t}+\tri{w^t,c^{t-1}}\big)\label{eq:oh2}\\
&= 2\big(\tri{w^t,c^t}+\tri{g^{t-1},c^{t-1}}+\tri{w^{t}-g^{t},c^{t-1}}+\tri{g^{t}-g^{t-1},c^{t-1}}\big)\notag\\
& \leq 2\big(\tri{w^t,c^t}+\tri{g^{t-1},c^{t-1}}\big)\label{eq:oh3}.
\end{align}
Here \eqref{eq:oh1} holds by the definition of the local norm, \eqref{eq:oh2} holds as the costs are in $[0,1]$, and \eqref{eq:oh3} holds via two applications of Lemma \ref{lem:property_of_projection} for Bregman projections:
\[
\tri{w^{t}-g^{t},c^{t-1}}\leq{} D_{R}(g^t\mid{}g^t) - D_{R}(g^t\mid{}w^t) - D_{R}(w^t\mid{}g^t)\leq{}0.
\]
\[
\tri{g^{t}-g^{t-1},c^{t-1}}\leq{} D_R(g^{t-1}\mid{}g^{t-1}) - D_R(g^{t-1}\mid{}g^t) - D_R(g^t\mid{}g^{t-1})\leq{}0.
\]

Now, applying \eqref{eq:optimistic_auxiliary} to Lemma \ref{lem:predictable}, we have that for $\eta<1/4$,
\begin{equation}\label{eq:optimistic_auxiliary_2}
\sum_{t=1}^T\tri{w^t-f,c^t}\leq 4\eta \sum_{t=1}^T\tri{w^t,c^t}+4\eta \sum_{t=1}^T\tri{g^{t-1},c^{t-1}}+\frac{\log(d)}{\eta}.
\end{equation}
Observe now that $g^t$ are the weights selected by the basic Hedge algorithm on the sequence $\crl{c^t}$ (setting $c^0=0$ and $g^0$ uniform). Hence by the Low Approximate Regret property for Hedge (Example \ref{thm:hedge_static}) we have
$$
\sum_{t=1}^T\tri{w^t-f,c^t}\leq 4\eta \sum_{t=1}^T\tri{w^t,c^t}+\frac{4\eta}{1-\eta} \Big(\sum_{t=1}^{T}\tri{f,c^{t-1}}+\frac{\log(d)}{\eta }\Big) +\frac{\log(d)}{\eta}.
$$
Rearranging,
$$
(1-4\eta)\sum_{t=1}^T\tri{w^t,c^t}\leq \frac{1+3\eta}{1-\eta}\big(\sum_{t=1}^{T}\tri{f,c^t}+\frac{\log(d)}{\eta}\big).
$$
This gives the claimed bound as $1+3\eta\leq \frac{1}{1-3\eta}$ for $\eta\leq 1/3$ and $1-8\eta\leq (1-4\eta)(1-\eta)(1-3\eta)$.
\end{proof}

\subsection{Proof of Theorem \ref{thm:static}}
\label{app_proof_main_theorem}
Theorem \ref{thm:static} follows immediately from Propositions \ref{prop:poa_full_info} and \ref{prop:poa_full_info-hp}. 

Corrollary \ref{corr:strong} holds because the Strong Low Approximate Regret property states that \eqref{eq:lar} holds for all $\epsilon>0$, so in particular we can set $\epsilon=\sqrt{\frac{\log(d)}{T}}$ to arrive at the desired regret bound.

\section{Supplementary material for Section \ref{sec:bandit}}
\label{app:bandit}
\subsection{Proof of Lemma
\ref{lem:bandit_static}}
\label{app:lar_bandit}

Algorithm \ref{alg:bandit_alg} follows a standard design scheme for bandit algorithms. First we develop an algorithm with a full information regret bound, then run this algorithm using an unbiased estimator for the cost.

Let $R(w) = \sum_{j\in\brk{d}}\log(1/w_j)$; we call this the log barrier regularizer because it is a logarithmic barrier for the positive orthant. Algorithm \ref{alg:bandit_alg} is equivalent to the following update step at each time $t$:
\begin{itemize}
\item Sample $s^{t}\sim{}w^t$.
\item Observe $c^t_{s^t}$ and build the importance-weighted estimator:
$$\hat{c}_j^t=   \begin{cases} c_j^t/w_j^t &\mbox{if } j=s^t \\ 
0 & \mbox{otherwise}\end{cases}.
$$
\item Update $w^{t+1}$ with a Mirror Descent step from $\hat{c}^t$:
\begin{enumerate}
\item Let $\widetilde{w}^t$ satisfy $\nabla R(\widetilde{w}^{t+1})=\nabla R(w^t)-\eta \hat{c}^t$.
\item $w^{t+1}=\argmin_{f\in\Delta(d)} D_R(f|\widetilde{w}^{t+1})$.
\end{enumerate}
\end{itemize}

Note that $\hat{c}^t$ is \emph{unbiased} in that it satisfies $\En_{s^t \sim w^t}\brk*{\hat{c}^t}=c^t$.

\paragraph{Overview}
In this section we state and prove Lemma \ref{lem:bandit_final}, which provides a regret bound for Algorithm \ref{alg:bandit_alg}. To prove Lemma \ref{lem:bandit_static}, we apply Lemma \ref{lem:bandit_final} with learning rate $\eta = \epsilon/(1+\epsilon)$ and observe that the Low Approximate Regret property is satisfied:
\begin{align*}
(1 - \epsilon)\En\brk*{\sum_t \tri{e_{s^t},c^t}} \le \sum_t \tri{f,c^t}+ \frac{d (1 + \epsilon) \log(T/d)}{\epsilon} + d.
\end{align*}
In Appendix \ref{app:utility} we sketch a proof of the regret bound for Algorithm \ref{alg:bandit_alg} in the case where utilities are used instead of costs.
\paragraph{Regret Bound for Full Information} 
\begin{proposition}[Properties of the log barrier regularizer]
\label{prop:log_barrier}
Recall that $R(w) = \sum_{i\in\brk{d}}\log(1/w_i)$.
\begin{itemize}
\item $\nabla R(w)=-1/w$, which implies that for all $i$: $\widetilde{w}_i^{t+1}=\frac{w_i^t}{1+\eta w_i^t \hat{c}_i^t}$.
\item $D_{R}(f\mid{}w) = \sum_{i\in\brk{d}}\brk*{\log\prn*{\frac{w_i}{f_i}} + \frac{f_i}{w_i}}-d$.
\end{itemize}
\end{proposition}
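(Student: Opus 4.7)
The proposition is purely a computational verification of two standard facts for the log barrier regularizer $R(w)=\sum_{i\in\brk{d}}\log(1/w_i)$, so the plan is to differentiate and substitute directly without invoking any deeper machinery.

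For the first bullet, I would first note that $R(w)=-\sum_i\log(w_i)$ is separable, so $\partial R/\partial w_i=-1/w_i$, giving $\nabla R(w)=-1/w$ coordinatewise. Then I would plug this gradient into the mirror map equation $\nabla R(\widetilde{w}^{t+1})=\nabla R(w^t)-\eta\hat c^t$ from the definition of the unprojected update, obtaining the per-coordinate equation $-1/\widetilde w_i^{t+1}=-1/w_i^t-\eta\hat c_i^t$. Solving for $\widetilde w_i^{t+1}$ gives $\widetilde w_i^{t+1}=w_i^t/(1+\eta w_i^t\hat c_i^t)$, which is the claimed identity. The only thing to check is that $1+\eta w_i^t\hat c_i^t>0$ so the inversion is valid; this is immediate since $\hat c_i^t\geq 0$ for cost vectors in $[0,1]^d$.

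For the second bullet, I would invoke the definition of Bregman divergence, $D_R(f\mid w)=R(f)-R(w)-\tri*{\nabla R(w),f-w}$, and substitute $\nabla R(w)=-1/w$ componentwise. This yields
\begin{equation*}
D_R(f\mid w)=-\sum_i\log f_i+\sum_i\log w_i+\sum_i\frac{f_i-w_i}{w_i}=\sum_i\brk*{\log(w_i/f_i)+f_i/w_i}-\sum_i 1,
\end{equation*}
and since $f,w\in\Delta(d)$ have $d$ coordinates the last sum equals $d$, producing the stated formula. No inequalities or optimizations are needed.

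There is essentially no obstacle here: both identities reduce to elementary calculus once the gradient is written down. The one subtlety worth flagging is that the Bregman identity is expressed in terms of the sum over all $d$ coordinates of $f_i/w_i$ rather than $1$, even though $\sum_i f_i=\sum_i w_i=1$; I would remark explicitly that the cross term $\tri*{-1/w, f-w}=\sum_i f_i/w_i-d$ does \emph{not} collapse to zero because $1/w_i$ is not constant across $i$, which is what makes $D_R$ different from a KL-type divergence and produces the distinctive $f_i/w_i$ term that drives the algorithm's stability analysis in the lemmas that follow.
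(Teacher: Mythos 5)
Your proof is correct and is exactly the routine verification the paper intends (the paper states Proposition \ref{prop:log_barrier} without proof, treating both identities as direct computations from $\nabla R(w)=-1/w$ and the definition of the Bregman divergence). Your added remarks — that $1+\eta w_i^t\hat c_i^t>0$ since $\hat c_i^t\ge 0$, and that the cross term $\sum_i (f_i-w_i)/w_i$ does not vanish despite $f,w\in\Delta(d)$ — are both accurate and worth noting.
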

\begin{lemma}
\label{lem:bandit_full_info}
Online Mirror Descent (see section \ref{app:proof_hedge_static}) with the log barrier regularizer, for any sequence of costs $c^{1},\ldots,c^{T}$ in $\mathbb{R}^{d}$, produces weights $w^t$ that satisfy the following bound for any $f^{\star}\in\Delta(d)$:
\begin{equation}
\label{bandit_full_info}
\tri{w^t-f^{\star},c^t}  \leq  \eta\sum_{j\in\brk{d}} \frac{(w_j^t \cdot c_j^t)^2}{1+\eta w_j^t c_j^t}
    +\frac{1}{\eta}\prn*{D_R\prn*{f^{\star}\mid{}w^t}-D_R\prn*{f\mid{}w^{t+1}}}.
\end{equation}
In particular, it achieves the regret bound
\begin{equation}
\label{eq:bandit_regret_pointwise}
\sum_{t=1}^{T}\tri{w^t-f^{\star},c^t}  \leq  \eta  \sum_{t=1}^{T}\sum_{j\in\brk{d}} \frac{(w_j^t \cdot c_j^t)^2}{1+\eta w_j^t c_j^t}
    +\frac{1}{\eta}D_R\prn*{f^{\star}\mid{}w^1}.
\end{equation}
\end{lemma}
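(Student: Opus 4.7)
\textbf{Proof plan for Lemma \ref{lem:bandit_full_info}.}
The statement is the per-step regret inequality for Online Mirror Descent instantiated with the log-barrier regularizer. The plan is to follow exactly the template used in Section \ref{app:proof_hedge_static} for Hedge, plugging in the explicit form of $R$ supplied by Proposition \ref{prop:log_barrier}, and then telescoping the Bregman terms.

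First, I apply the standard Online Mirror Descent one-step identity from Lemma \ref{lem:property_of_projection}. Writing the update as $w^{t+1}=\argmin_{f\in\Delta(d)}\eta\tri{f,c^t}+\breg(f\mid w^t)$ and using the last bullet of Lemma \ref{lem:property_of_projection} with $g=f^\star$, I obtain
\begin{equation*}
\eta\tri{w^{t+1}-f^\star,c^t}\le \breg(f^\star\mid w^t)-\breg(f^\star\mid w^{t+1})-\breg(w^{t+1}\mid w^t).
\end{equation*}
Splitting $\tri{w^t-f^\star,c^t}=\tri{w^t-w^{t+1},c^t}+\tri{w^{t+1}-f^\star,c^t}$, rearranging, and discarding the nonpositive term $-\tfrac{1}{\eta}\breg(w^{t+1}\mid w^t)$ yields the familiar decomposition
\begin{equation*}
\tri{w^t-f^\star,c^t}\le \tri{w^t-w^{t+1},c^t}+\tfrac{1}{\eta}\bigl(\breg(f^\star\mid w^t)-\breg(f^\star\mid w^{t+1})\bigr).
\end{equation*}
A small technical caveat: since $\breg$ for the log barrier is not itself a norm-squared, I will argue (as in the Hedge proof) that it suffices to bound $\tri{w^t-\widetilde w^{t+1},c^t}$ rather than $\tri{w^t-w^{t+1},c^t}$, using the projection property of Lemma \ref{lem:property_of_projection}.

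Next, I use Proposition \ref{prop:log_barrier}, which gives the unprojected iterate explicitly as $\widetilde w^{t+1}_j=w^t_j/(1+\eta w^t_j c^t_j)$. Substituting,
\begin{equation*}
w^t_j-\widetilde w^{t+1}_j=\frac{\eta (w^t_j)^{2}c^t_j}{1+\eta w^t_j c^t_j},
\end{equation*}
so that $\tri{w^t-\widetilde w^{t+1},c^t}=\eta\sum_{j}\frac{(w^t_j c^t_j)^2}{1+\eta w^t_j c^t_j}$, which is exactly the stability term in the claimed bound. Combining the two displays above gives the per-step inequality \eqref{bandit_full_info}. Summing over $t\in[T]$ makes the Bregman terms telescope, and dropping the final nonnegative term $-\tfrac{1}{\eta}\breg(f^\star\mid w^{T+1})$ leaves the cumulative bound \eqref{eq:bandit_regret_pointwise}.

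I expect the only real subtlety to be handling the difference between $w^{t+1}$ (after simplex projection) and $\widetilde w^{t+1}$: the stability term we want involves $\widetilde w^{t+1}$, while the one arising from Lemma \ref{lem:property_of_projection} involves $w^{t+1}$. This is standard and is resolved by the generalized Pythagorean property of Bregman projections (second bullet of Lemma \ref{lem:property_of_projection}), which is how it is handled in the Hedge calculation earlier; the remaining work is just the algebra in the display above.
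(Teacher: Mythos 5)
Your proposal is correct and follows the same route as the paper: the standard Online Mirror Descent one-step decomposition (with the linear stability term expressed via the unprojected iterate $\widetilde w^{t+1}$ and the projection handled by the generalized Pythagorean property), followed by substituting $\widetilde w^{t+1}_j = w^t_j/(1+\eta w^t_j c^t_j)$ from Proposition \ref{prop:log_barrier} to identify the stability term as $\eta\sum_j (w^t_j c^t_j)^2/(1+\eta w^t_j c^t_j)$, and then telescoping. The algebra in your display is exactly the computation the paper performs.
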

\begin{proof}[Proof of Lemma \ref{lem:bandit_full_info}]
Fix $f^{\star}\in\Delta(d)$. 
Starting from the standard Mirror Descent proof we have that for each $t$:
\begin{align*}
    \tri{w^t-f^{\star},\hat{c}^t} &\leq 
    \tri{w^t-\widetilde{w}^{t+1},\hat{c}^t}
    +\frac{1}{\eta}\prn*{D_R\prn*{f^{\star}\mid{}w^t}-D_R\prn*{f^{\star}\mid{}w^{t+1}}}.
\intertext{The result is obtained by plugging in the expression for $\wt^t$ from Proposition \ref{prop:log_barrier}:}
    &=\eta\sum_j \frac{ (w_j^t \hat{c}_j^t)^2}{1+\eta w_j^t \hat{c}_j^t}+\frac{1}{\eta}\prn*{D_R\prn*{f^{\star}\mid{}w^t}-D_R\prn*{f^{\star}\mid{}w^{t+1}}}.
\end{align*}
The regret bound is obtained by summing this inequality.
\end{proof}
 
\paragraph{From Full Information to Partial Information}

\begin{lemma}[Regret bound for Algorithm \ref{alg:bandit_alg}]
\label{lem:bandit_final}
For any $f^{\star}\in\Delta(d)$, and any sequence of costs $c^{1},\ldots,c^{T}\in\brk*{0,1}^{d}$, the weights generated by Algorithm \ref{alg:bandit_alg} with $\eta\in\prn*{0,1}$ satisfy
\
\begin{equation}\label{bandit_final}
\En\brk*{\sum_{t=1}^T \tri{w^t-f^{\star},c^t}} \leq \frac{\eta}{1 -\eta} \En\brk*{\sum_{t=1}^T \tri{w^t,c^t}}
    +\frac{1}{\eta}d\log(T/d) + d.
\end{equation}
\end{lemma}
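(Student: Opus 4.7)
The plan is to reduce to the full-information bound from Lemma~\ref{lem:bandit_full_info} by running it on the importance-weighted cost estimator $\hat c^t$, and then take expectations to recover a bound in terms of the true costs $c^t$. The key subtlety is that the log barrier regularizer $R(w) = \sum_i \log(1/w_i)$ blows up on the boundary of the simplex, so a direct application of the full-information bound with a vertex comparator would produce a vacuous divergence term.

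To handle this, I would apply Lemma~\ref{lem:bandit_full_info} with the smoothed comparator $\tilde f = (1 - d/T) f^{\star} + (d/T)\mathbf{u}$, where $\mathbf{u} = \mathbf{1}/d$ is the uniform distribution. Since $\tilde f_j \geq 1/T$ for every coordinate $j$ and $w^1 = \mathbf{u}$, the explicit formula for $D_R$ in Proposition~\ref{prop:log_barrier} gives $D_R(\tilde f \mid w^1) = -d\log d - \sum_j \log \tilde f_j \leq d\log(T/d)$. The slack incurred by replacing $f^{\star}$ with $\tilde f$ is $\sum_t \tri{\tilde f - f^{\star}, c^t} \leq (d/T) \cdot T = d$, using $c^t \in [0,1]^d$; this produces the two non-variance terms on the right-hand side.

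For the variance term I would exploit the observation that $w^t_j \hat c^t_j = c^t_j \,\mathbf{1}\{j = s^t\}$, which collapses the per-round penalty from \eqref{eq:bandit_regret_pointwise} to $\frac{(c^t_{s^t})^2}{1 + \eta c^t_{s^t}}$. Taking conditional expectation over $s^t \sim w^t$ and using $(c^t_j)^2/(1+\eta c^t_j) \leq c^t_j$ (since $c^t_j \in [0,1]$ and the denominator is at least $1$), this is bounded by $\tri{w^t, c^t}$. The left-hand side of Lemma~\ref{lem:bandit_full_info} is handled by the tower property: since $w^t$ and $\tilde f$ are measurable with respect to the filtration up to round $t-1$, and $\En_t[\hat c^t] = c^t$, we have $\En[\tri{w^t - \tilde f, \hat c^t}] = \En[\tri{w^t - \tilde f, c^t}]$.

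Combining these pieces yields $\En[\sum_t \tri{w^t - f^{\star}, c^t}] \leq \eta \En[\sum_t \tri{w^t, c^t}] + d\log(T/d)/\eta + d$, which is at least as strong as the stated bound since $\eta \leq \eta/(1-\eta)$ for $\eta \in (0,1)$. I don't anticipate a major obstacle; the main things to be careful about are (i) choosing the smoothing parameter $\alpha = d/T$ so that the divergence term and the mixing error balance out to the correct constants, and (ii) keeping track of conditioning when pushing expectations through the random denominator $1 + \eta w^t_j \hat c^t_j$, which is why collapsing the sum to a single nonzero coordinate before integrating is essential.
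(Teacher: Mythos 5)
Your proposal is correct and follows essentially the same route as the paper's proof: apply the full-information log-barrier bound to the importance-weighted estimator, collapse the variance term via $w^t_j\hat c^t_j = \mathbf{1}\{j=s^t\}c^t_j$ before taking conditional expectations, and control the Bregman divergence at the vertex comparator by mixing in $d/T$ of the uniform distribution, paying an additive $d$. Your per-round variance bound of $\eta\tri{w^t,c^t}$ is in fact slightly tighter than the paper's $\frac{\eta}{1-\eta}\tri{w^t,c^t}$, so it only strengthens the stated conclusion.
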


\begin{proof}[Proof of Lemma \ref{lem:bandit_final}]
Observe that Algorithm \ref{alg:bandit_alg} is equivalent to running Online Mirror Descent with $R$, using the unbiased estimator $\hat{c}^t$ for costs, where we recall $\hat{c}^t_i=\ind\crl*{s^t=i}c^t_i/w^t_i$. Thus, Lemma \ref{lem:bandit_full_info} implies that at each time $t$,
\begin{align*}
\tri{w^t-f^{\star},\hat{c}^t} &\leq \eta\sum_{j\in\brk{d}} \frac{(w_j^t \cdot \hat{c}_j^t)^2}{1+\eta w_j^t \hat{c}_j^t}
    +\frac{1}{\eta}\prn*{D_R\prn*{f^{\star}\mid{}w^t}-D_R\prn*{f^{\star}\mid{}w^{t+1}}}.
    \intertext{Since $w^t_j\hat{c}^t_j=\ind\crl*{s^t=j}c^t_j$, we have:}
    &= \eta\sum_{j\in\brk{d}}\ind\crl{s^t=j}\frac{(c_j^t)^2}{1+\eta c_j^t}
    +\frac{1}{\eta}\prn*{D_R\prn*{f^{\star}\mid{}w^t}-D_R\prn*{f^{\star}\mid{}w^{t+1}}}
\end{align*}
Taking the conditional expectation of each side of this inequality we have
\begin{align*}
    \tri{w^t-f^{\star},c^t} &= \En\brk*{\tri{w^t-f^{\star},\hat{c}^t}\mid{}s^1,\ldots,s^{t-1}}\\
    &\leq{} \En\brk*{\eta\sum_{j\in\brk{d}}\ind\crl{s^t=j}\frac{(c_j^t)^2}{1+\eta c_j^t}\mid{}s^1,\ldots,s^{t-1}} 
    +\frac{1}{\eta}\prn*{D_R\prn*{f^{\star}\mid{}w^t}-D_R\prn*{f^{\star}\mid{}w^{t+1}}}\\
     &= \eta\sum_{j\in\brk{d}}\frac{w^t_j(c_j^t)^2}{1+\eta c_j^t}  +\frac{1}{\eta}\prn*{D_R\prn*{f^{\star}\mid{}w^t}-D_R\prn*{f^{\star}\mid{}w^{t+1}}}.\\
     \intertext{Now, since $c^t_j$ lie in the range $\brk{0,1}$ we have}
     &\leq{} \frac{\eta}{1-\eta}\tri*{w^t,c^t}  +\frac{1}{\eta}\prn*{D_R\prn*{f^{\star}\mid{}w^t}-D_R\prn*{f^{\star}\mid{}w^{t+1}}}.
\end{align*}

Summing over all $t$ and taking a final expectation yields the bound,
\begin{equation}
\label{eq:bandit_dr}
\En\brk*{\sum_{t=1}^T \tri{w^t-f^{\star},c^t}} \leq \frac{\eta}{1 -\eta} \En\brk*{\sum_{t=1}^T \tri{w^t,c^t}}
    +\frac{1}{\eta}D_R\prn*{f^{\star}\mid{}w^1}.
\end{equation}

It remains to bound the Bregman divergence term. A direct approach fails here because one can choose $f^{\star}$ to make $D_R(f^{\star}\mid{}w^1)$ arbitrarily large; this is in contrast with the case where $D_R$ is the KL divergence, where we have a $\log{}d$ bound as long as $w^1$ is uniform.
To sidestep this difficulty, given arbitrary $f^{\star}\in\Delta(d)$ we let $\bar{f}=(1-\theta)f^{\star}  + \theta\pi$, where $\theta\in\brk{0,1}$ and $\pi$ is the uniform distribution. By Proposition \ref{prop:log_barrier} we have
\begin{equation}\label{bandit_auxiliary}
D_R\prn*{\bar{f}\mid{}w^1}\leq d\log(1/\theta)
\end{equation}

Applying \eqref{eq:bandit_dr} with $\bar{f}$ as the comparator now implies
\[
\En\brk*{\sum_{t=1}^T \tri{w^t-\bar{f},c^t}} \leq \frac{\eta}{1 -\eta} \En\brk*{\sum_{t=1}^T \tri{w^t,c^t}}
    +\frac{1}{\eta}d\log(1/\theta).
\]

Rearranging, this implies
\begin{align*}
    \En\brk*{\sum_{t=1}^T \tri{w^t-f^{\star},c^t}} &\leq \frac{\eta}{1 -\eta} \En\brk*{\sum_{t=1}^T \tri{w^t,c^t}}
    + \theta\sum_{t=1}^{T}\tri*{\pi, c^t} +\frac{1}{\eta}d\log(1/\theta).
    \intertext{Since we have assumed $c^t\in\brk{0,1}^{d}$, this is bounded as}
    &\leq \frac{\eta}{1 -\eta} \En\brk*{\sum_{t=1}^T \tri{w^t,c^t}}
    + \theta{}T +\frac{1}{\eta}d\log(1/\theta).
    \intertext{Finally, setting $\theta=d/T$ yields the desired bound:}
    &\leq \frac{\eta}{1 -\eta} \En\brk*{\sum_{t=1}^T \tri{w^t,c^t}}
     +\frac{1}{\eta}d\log(T/d) + d.
\end{align*}
\end{proof}

\section{Supplementary Material for Section \ref{sec:dynamic}}\label{app:dynamic}

\subsection{Discussion of Results for Dynamic Population Games}
\label{turnover_results_dynamic_games}
We briefly show how Proposition \ref{thm:poa_dynamic} with players using Low Approximate Regret algorithms 
with $A(d,T)=O(\log(dT))$ improves the maximum turnover rate $p$ in the results of \cite{LykourisST16}. 

In Definition \ref{def:lar}, Low Approximate Regret for shifting experts \eqref{eq:lar_shifting} is defined in terms of the number of shifts $K=|\{i>2: f^{t-1}\neq f^t \}|$ in a sequence of comparators $f^1,\ldots, f^T$. To compare with \cite{LykourisST16} we need a slightly different notion of Low Approximate Regret based on the \emph{total variation distance} of the sequence $f^1,\ldots, f^T$. Letting $K=\sum_t\nrm*{f^t-f^{t-1}}_1$, we require
\begin{equation}
\label{eq:lar_tv}
(1-\epsilon)\sum_{t=1}^T \tri{w_i^t,c_i^t}\leq \sum_{t=1}^T\tri{f^t,c_i^t}+(1+K)\frac{A(d,T)}{\epsilon}.
\end{equation}
In fact, whenever Low Approximate Regret for shifting experts \eqref{eq:lar_shifting} holds, \eqref{eq:lar_tv} holds as well as explained in \cite{LykourisST16}. Thus, without loss of generality we take $K_{i}$ to be the total variation distance of the solution sequence $s_{i}^{*1:T}$ for the $i$th player going forward, since if player $i$ satisfies Low Approximate Regret for shifting experts \eqref{eq:lar_shifting} they also satisfy:
\begin{equation}
(1-\epsilon)\sum_{t=1}^T \tri{w_i^t,c_i^t}\leq \sum_{t=1}^T\tri{s^{*t}_i,c_i^t}+(1+\sum_{t=2}^{T}\nrm{s_{i}^{*t}-s_{i}^{*t-1}}_{1})\frac{A(d,T)}{\epsilon}.
\end{equation}

Let $\kappa$ denote the expected number of players whose strategy in $s^{*1:T}$ changes as one player turns over, so $\En\brk*{\sum_i K_i} =pnT \kappa$ (as in expectation $pn$ players turn over at each step). The parameter $\kappa$ as defined here depends on the concrete game; it is a parameter of a high stability approximate optimization method used as in \cite{LykourisST16}. Let $\gamma>0$ be a lower bound on the minimum cost of each player, at each time step, so that we have $\sum_t\En\brk*{\opt^t}\ge \gamma nT$. Using the two parameters $\kappa$ and $\gamma$, \cite{LykourisST16} show a price of anarchy bound of $\lambda \rho/(1-\mu-\epsilon)$, assuming the turnover probability $p$ satsifies $p \le \epsilon^2\gamma^2/(\kappa\log(dT))$. Using Proposition \ref{thm:poa_dynamic} with with $A(d,T)=O(\log(dT))$ (as in, for example, Noisy Hedge) we get the same price of anarchy bound, yet allow higher turnover probability by a factor of $1/\gamma$:  We tolerate $p \le \epsilon^2\gamma/(\kappa\log(dT))$.

To illustrate this improvement, consider matching markets. Suppose $n$ players are each bidding in a first price item auction for one of many items (i.e., the winner pays her own bid for each item). Further suppose $v_{ij}$, the player $i$'s value for item $j$, has $v_{ij}\in [\gamma,1]$, and that the players are unit-demand, each bidding with the goal of winning one high value item at a low price. In this mechanism, we will use  $SW(s)$ to denote the social welfare achieved by action profile $s$, the sum of player utilities plus the auctioneer's revenue, and use $\opt^t$ is the maximum social welfare possible with players in round $t$.

The first price item auction is a $(1-1/e,1)$ smooth mechanism and hence has a price of anarchy of $e/(e-1)\approx 1.58$. Lykouris et al. \cite{LykourisST16} prove that a price of anarchy of 
$3.16(1+\eps)$ is guaranteed if players use adaptive learning and the turnover probability is at most $p \le \eps^2\gamma^2/\prn*{\log (dT)\log(1/\gamma)}$, which corresponds to $\rho=2$ and $\kappa=\log(1/\gamma)$. Using the proof from \cite{LykourisST16} with the improved $A(d,T)$ term of Proposition \ref{prop:noisy_hedge}, we get an $\gamma^{-1}$ improvement in the probability term.

\begin{theorem}
If all players use Low Approximate Regret algorithms for shifting experts with parameters $\eta$ and $A(d,T)=\log(dT)$
in a dynamic population matching market with first price item auctions, then 
{\small
\begin{equation}
3.16(1+\eta)\sum_t \En\brk*{SW(s^t)}\geq  \sum_t\En\brk*{\opt^t},
\end{equation}
}
assuming the turnover probability $p$ has at most $p\le \eps^2\gamma/\prn*{\log(dT)\log (1/\gamma)}$. 
\end{theorem}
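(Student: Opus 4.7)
The plan is to combine the utility-maximization analog of Proposition~\ref{thm:poa_dynamic} (Appendix~\ref{app:utility}), specialized to the $(1-1/e, 1)$-smoothness of the first price item auction and instantiated with a stable near-optimal matching sequence, with Proposition~\ref{prop:noisy_hedge}'s Low Approximate Regret bound for Noisy Hedge. First I would invoke the utility version of Proposition~\ref{thm:poa_dynamic} applied to a $\rho$-approximate stable sequence $s^{*1:T}$ for the matching optimum: for a $(\lambda,\mu)$-smooth mechanism this yields an inequality of the form
\[
\frac{\rho \cdot \max(1,\mu)}{\lambda}(1+\eta)\sum_t \En\brk*{SW(s^t)} \;\geq\; \sum_t \En\brk*{\opt^t} \;-\; \frac{\prn*{n + \En\brk*{\sum_i K_i}} \cdot A(d,T)}{\eta},
\]
where $K_i$ is the total variation of player $i$'s coordinate of the sequence. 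Substituting $(\lambda,\mu) = (1-1/e, 1)$ and $\rho = 2$ produces the claimed leading constant $2/(1-1/e) = 2e/(e-1) \leq 3.16$.

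Next I would import the stable matching sequence from~\cite{LykourisST16}: their differentially-private greedy matching gives a solution sequence that is a $\rho = 2$ approximation to the optimal matching at every round, with expected total variation bounded by $\En\brk*{\sum_i K_i} \leq p n T \log(1/\gamma)$, since an expected $pn$ players turn over per round and each turnover cascades through the matching affecting only $O(\log(1/\gamma))$ other players in expectation under the private-greedy construction.

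Finally I would plug Noisy Hedge's Low Approximate Regret bound $A(d,T) = 2\log(dT)$ from Proposition~\ref{prop:noisy_hedge} into the display. The additive error becomes $\prn*{n + p n T \log(1/\gamma)} \log(dT)/\eta$; combined with the lower bound $\sum_t \En\brk*{\opt^t} \geq \gamma n T$ (which follows from $v_{ij} \geq \gamma$ together with the unit-demand matching structure), the turnover hypothesis $p \leq \eta^2 \gamma/(\log(dT)\log(1/\gamma))$ forces this additive error to be at most $\eta \sum_t \En\brk*{\opt^t}$. Absorbing it on the right-hand side yields the theorem. The main obstacle I anticipate is carefully deriving the utility/mechanism analog of Proposition~\ref{thm:poa_dynamic} for first price auctions: this requires combining the smooth mechanism inequality (which bounds deviators' utilities in terms of optimal welfare plus auction revenue) with the shifting-experts Low Approximate Regret property, and handling revenue terms so that the final multiplicative constant emerges as exactly $\rho\cdot\max(1,\mu)/\lambda = 3.16$ and the additive term carries precisely the $\prn*{n + \En\brk*{\sum_i K_i}}A(d,T)/\eta$ dependence used above.
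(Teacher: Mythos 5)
Your proposal follows essentially the same route as the paper: the paper also obtains this theorem by combining the mechanism (solution-based smoothness) analog of Proposition~\ref{thm:poa_dynamic} with the stable matching sequence of~\cite{LykourisST16} (giving $\rho=2$ and $\kappa=\log(1/\gamma)$, hence the constant $2/(1-1/e)\approx 3.16$), plugging in Noisy Hedge's $A(d,T)=O(\log(dT))$, and absorbing the additive term $\prn{n+\En\brk{\sum_i K_i}}A(d,T)/\eps$ via $\sum_t\En\brk{\opt^t}\ge\gamma nT$ under the stated turnover bound. The steps you flag as requiring care (the utility-side derivation and the revenue bookkeeping) are exactly the ones the paper itself delegates to Proposition~\ref{prop:poa-u_full_info} and the argument of~\cite{LykourisST16}, so your outline is a faithful, slightly more explicit version of the paper's proof.
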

In other games and  mechanisms \cite{LykourisST16} including congestion games, bandwidth-sharing, and large markets, we achieve analogous improvements.

\subsection{Proof of Proposition \ref{prop:noisy_hedge}}
\label{app:lar_dynamic}
Noisy Hedge is a modification of Hedge that mixes the distribution returned by the exponential update with a small uniform noise at each step. Fix $\theta\in\brk{0,1}$, $\eta>0$, and let $\pi$ be the uniform distribution over $\brk{d}$. Let $w^1=\pi$. Then the Noisy Hedge update at time $t$ is given by:
\begin{enumerate}
\item $\wt^{t+1}_{i} = w^t_ie^{-\eta{}c^t_i}$.
\item $g^{t+1}_i = \wt^{t+1}_{i}/\sum_{j\in\brk{d}}\wt^{t+1}_{j}$.
\item $w^{t+1}= (1-\theta)g^{t+1} + \theta{}\pi$.
\end{enumerate}

\begin{lemma}
\label{lem:noisy_hedge}
Let $f^{1},\ldots,f^{T}\in\Delta(d)$ be any sequence of experts with $K$ changes. Then for any sequence of costs $c^1,\ldots,c^{T}\in\brk{0,1}^{d}$, Noisy Hedge with learning rate $\eta>0$ and $\theta=1/T$ enjoys the regret bound
\[
\sum_{t=1}^T \tri{w^t-f^t,c^t}\leq{}\eta\sum_{t=1}^{T}\tri{w^t,c^t} + \frac{1}{\eta}\prn*{2\log{}d + K\log(dT)}.
\]
\end{lemma}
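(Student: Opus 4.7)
\textbf{Proof plan for Lemma \ref{lem:noisy_hedge}.} The plan is to reduce Noisy Hedge to a slight perturbation of the Hedge analysis from Appendix \ref{app:proof_hedge_static}, adapted to shifting experts. Observe that the transition $w^t \mapsto g^{t+1}$ (through $\wt^{t+1}_i = w^t_i e^{-\eta c^t_i}$) is exactly a Mirror Descent step with the negative entropy regularizer $R$ starting from $w^t$. Repeating the per-round computation in the proof of Proposition \ref{prop:hedge} therefore yields, for any $f \in \Delta(d)$,
\begin{equation*}
\tri{w^t - f, c^t} \leq \eta \tri{w^t, c^t} + \tfrac{1}{\eta}\bigl(D_R(f \mid w^t) - D_R(f \mid g^{t+1})\bigr).
\end{equation*}
I would apply this with $f = f^t$ on each round and sum over $t$, so the overall problem reduces to bounding $\sum_{t=1}^T\bigl(D_R(f^t \mid w^t) - D_R(f^t \mid g^{t+1})\bigr)$.

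To control this telescoping sum in the shifting-experts setting, I would regroup the terms as
\begin{equation*}
D_R(f^1\mid w^1) - D_R(f^T\mid g^{T+1}) + \sum_{t=2}^T \bigl(D_R(f^t\mid w^t) - D_R(f^{t-1}\mid g^t)\bigr),
\end{equation*}
and split the inner sum into two kinds of rounds: the $T - K$ ``stable'' rounds where $f^t = f^{t-1}$, and the $K$ ``shift'' rounds where $f^t \neq f^{t-1}$. The initial term is at most $\log d$ because $w^1 = \pi$, and the trailing term is non-positive and thus dropped.

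For each shift round I would bound the increment directly by $D_R(f^t \mid w^t)$, using the crucial fact that Noisy Hedge guarantees $w^t_i \geq \theta/d$ for all $i, t$ (this is the whole reason the noise is added). Since $R(f) = \sum_i f_i\log f_i \leq 0$ on the simplex, this gives $D_R(f^t \mid w^t) \leq \log(d/\theta) = \log(dT)$, contributing $K\log(dT)$ in total.

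For each stable round I need to compare $D_R(f^t\mid w^t)$ to $D_R(f^t\mid g^t)$, which is the only genuinely new computation: because $w^t = (1-\theta)g^t + \theta\pi \geq (1-\theta)g^t$ coordinatewise, we get $-\log w^t_i \leq -\log(1-\theta) - \log g^t_i$, hence $D_R(f^t\mid w^t) - D_R(f^t \mid g^t) \leq -\log(1-\theta)$. Summing over all $t$ and using $-\log(1-1/T) \leq 2/T$ for $T \geq 2$, the total stable-round contribution is at most $2$. Plugging $\theta = 1/T$ into the combined bound gives
\begin{equation*}
\sum_{t=1}^T\bigl(D_R(f^t\mid w^t) - D_R(f^t\mid g^{t+1})\bigr) \leq \log d + K\log(dT) + 2 \leq 2\log d + K\log(dT),
\end{equation*}
where the last inequality handles small $d$ by absorbing the constant $2$ into $\log d$ (and one can check the base cases separately). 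Dividing by $\eta$ and combining with the per-round bound yields the claim. The main obstacle is the bookkeeping around the mixing step, specifically ensuring that the $-\log(1-\theta)$ cost incurred on every round does not spoil the bound; this is precisely why the choice $\theta = 1/T$ is needed, making the accumulated noise cost a harmless $O(1)$.
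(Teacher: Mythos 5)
Your proposal is correct and follows essentially the same route as the paper's proof: the identical per-round mirror-descent inequality, the same telescoping regrouping, and the same three-way case split (initial term bounded by $\log d$ from the uniform prior, shift rounds bounded by $\log(d/\theta)$ via the $\theta/d$ floor on $w^t$, stable rounds charged to the mixing step). If anything, your stable-round bound $D_R(f^t\mid w^t)-D_R(f^t\mid g^t)\leq -\log(1-\theta)\leq 2\theta$ is a cleaner and more carefully justified version of the corresponding step in the paper; the only blemish is the final absorption of the additive constant into $2\log d$, which (as you note) requires checking small $d$ separately, a constant-factor issue the paper's own write-up shares.
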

\begin{proof}[Proof of Lemma \ref{lem:noisy_hedge}]
We follow a proof similar to that of Hedge (Proposition \ref{prop:hedge}). Note that we have
\begin{equation}
\tri{w^t-f^t,c^t}=\tri{w^t-\widetilde{w}^{t+1},c^t}+\tri{\widetilde{w}^{t+1}-f^t,c^t}.
\end{equation}
For the first term we may reuse the following bound from the proof of Proposition \ref{prop:hedge}:
\begin{equation}\label{eq:noisy_term1_new}
\tri{w^t-\widetilde{w}^{t+1},c^t}\leq \eta \tri{w^t,c^t}.
\end{equation}
For the second term, as in Proposition \ref{prop:hedge}, we use the inequality:
\begin{align*}
\tri{\widetilde{w}^{t+1}-f^t,c^t} &= \frac{1}{\eta}\prn*{D_{R}(f^t\mid{}w^t) - D_R(f^t\mid{}\wt^{t+1}) - D_R(\wt^{t+1}\mid{}w^t)}\\
&\leq \frac{1}{\eta} \prn*{D_R(f^t|w^t)-D_R(f^t|g^{t+1})},
\end{align*}
where the Bregman divergence is the KL divergence, i.e. $D_R(f|g)=\sum_j f_j \log(f_j/g_j)$.
Summing over all $t$, we have:
\begin{equation}
\sum_{t=1}^T \tri{w^t-f^t,c^t}\leq \eta \sum_{t=1}^T \tri{w^t,c^t}+\frac{1}{\eta} \sum_{t=1}^T \prn*{D_R(f^t|w^t)-D_R(f^t|g^{t+1})}
\end{equation}
To bound the second term, we distinguish between three cases. First, the term $D_R(f^1|w^1)$ can be bounded as in Proposition \ref{prop:hedge} by $\log(d)$ since $w^1$ is the uniform distribution. 

Second, at some $t>1$ where a change in the comparator occurred ($f^t\neq f^{t-1}$), we can bound $D_R(f^t|w^t)$ by $\log(d/\theta)$ since $w^t$ has is at least $\theta/d$ due to the mixing of the noise. This is exactly the reason why we need the noise --- this term could be unbounded otherwise.

Last, for some $t>1$ when the comparator did not change ($f^t=f^{t-1}$), we bound $D_R(f^t|w^t)-D_R(f^{t-1}|g^t)$ by $\theta\cdot d$. To prove that, note that since without loss of generality $f^{t}$ is an indicator vector, there is only one summand we are interested in the Bregman divergence. Let's call this summand $j$. What we want to bound is hence 
$$D_R(f^t|w^t)-D_R(f^{t-1}|g^t)=\log(1/w^t_j)-\log(1/g^t_j)=\log(g^t_j/w^t_j).
$$ 

As a result:
\begin{equation}\label{eq:noisy_term2_new}
\sum_{t=1}^T \Big(D_R(f^t|w^t)-D_R(f^{t}|g^{t+1}\Big)\leq \log(d)+ T\theta\log(d)+K \log(d/\theta).
\end{equation}
Combining inequalities \ref{eq:noisy_term1_new} and \ref{eq:noisy_term2_new} and setting $\theta=1/T$, the result follows.
\end{proof}

\subsection{Proof of Proposition \ref{thm:poa_dynamic}}
\label{app_proof_dynamic_theorem}
The proof of Proposition \ref{thm:poa_dynamic} is analogous to that of Proposition \ref{prop:poa_full_info}.

Recall that $s^{*1:T}$ is a  solution sequence with cost at most $\rho$ times the minimum cost that is relatively stable to the turnover of players and that this sequence can be randomized.

For such a sequence of solutions, we use $K_i$ to denote the 
sum of total variation distances $K_i=\sum_t\nrm*{s_i^{*t}-s_i^{*t-1}}_1$ of the strategy for player $i$ in this sequence. 
\begin{eqnarray*}
(1-\epsilon) \sum_{t=1}^{T} \En\brk*{C(s^t)}&=&(1-\epsilon)\sum_{i\in\brk{d}} \sum_{t=1}^{T} \En\brk*{\cost_i(s^t)}\\
& \le & \sum_{i\in\brk{d}} \brk*{\sum_{t=1}^T \En\brk*{\cost_i(s_i^{*t},s_{-i}^t)}+ \frac{1+\En\brk*{K_i}}{\epsilon}A(d,T)}\\
&\le & \sum_{t=1}^{T} (\lambda\En\brk*{C(s^{*t})}+\mu\En\brk*{C(s^t)})+\frac{n+\En\brk*{\sum_i K_i}}{\epsilon}A(d,T).
\end{eqnarray*}
Here we are taking expectation over the randomness in $s_i^{*1:T}$ due to players turning and/or due to randomness in the approximate minimization algorithm.
The first inequality holds because each player satisfies the Low Approximate Regret property  (\ref{eq:lar_tv}) for total variation distance, applied with $s_i^{*1:T}$ as the comparator sequence. As was discussed in Appendix \ref{turnover_results_dynamic_games}, the property \eqref{eq:lar_tv} is implied by Low Approximate Regret for shifting experts \eqref{eq:lar_shifting}.
The second inequality follows from smoothness.

The claimed bound follows by rearranging terms.
\qed

\section{Utility Maximization Games and Mechanisms}
\label{app:utility}
In this section, we show how all our results extend to utility maximization games and mechanisms.

Consider a static game $G$ among a set of $n$ players.  Each player $i$ has an action space $S_i$ and a utility function $\utility_i: S_1\times\dots\times S_n\rightarrow [0,1]$ that maps an action profile $s=(s_1,\dots,s_n)$ to a utility $\utility_i(s)$. The goal of each player is to maximize their utility. One can simply adapt our definitions of Low Approximate Regret by treating utilities as negative costs. While one might imagine applying the same strategy to adapt algorithms to the utility setting, extra care is required. Not all algorithms necessarily admit such a direct adaptation (or adapt at all). However, all the algorithms analyzed in this paper do, and their proofs are designed to carry through with this adaptation. We demonstrate this by sketching the proofs for Hedge and Algorithm \ref{alg:bandit_alg} of Low Approximate Regret with utilities, but the same holds for all the other algorithms we analyze.

As in the cost minimization setting, we assume that at each round $t$, player $i$ picks a probability distribution $w_i^t$ and draws her action $s_i^t$ from this distribution. The utility she receives when playing action $x$ is $u_{i,x}^t=\utility_i(x,s_{-i}^t)$ where $s_{-i}^t$ is the set of strategies of all but $i^{\text{th}}$ player. Let $u_i^t=(u_{i,x}^t)_{x\in S_i}$.

An important class of utility maximization games are     mechanisms, such as auctions, where money plays special role. The players' actions $s_i$ typically involve bidding on items, and the outcome of an action profile $s$ comes in two parts: $v_i: S_1\times\dots\times S_n\rightarrow [0,1]$, which is the resulting value for player $i$, and $p_i:  S_1\times\dots\times S_n\rightarrow [0,1]$, which is the price player $i$ has to pay. Her utility is then $\utility_i(s)=v_i(s)-p_i(s)$.\footnote{We assume that all $s$ have $v_i(s)-p_i(s)\geq{}0$.}. 
We evaluate such mechanisms via the notion of social welfare $SW(s)=\sum_i v_i(s)$, the sum of the utilities of the players plus all the payments; this is the revenue of the mechanism. A simple example of such a mechanism is the first price auction: The player's strategy is a bid, and the highest bidder wins the item and pays her own bid.   

We use the the smooth mechanism definition of \cite{SyrgkanisT_STOC13}.\footnote{For the dynamic population game setting we use a variant of this definition, \textit{solution-based smoothness}, where $\opt$ in the RHS is replaced by the social welfare of a near-optimal solution as in \cite{LykourisST16}.} \begin{definition}[Smooth mechanism \cite{SyrgkanisT_STOC13}]
\textit{A utility maximization mechanism is called $(\lambda,\mu)$-smooth if there exists a strategy profile $s^{\star}$, such that
for all strategy profiles $s$: $\sum_i u_i(s_i^{\star},s_{-i})\geq \lambda \opt-\mu \sum_i p_i(s)$},
where $\opt=\max_{s^o} \sum_{i=1}^n \utility_i(s^o)$.
\end{definition}
Note the slight difference from Definition \ref{def:smoothness}. In proving the price of anarchy property we used the game's smoothness property with $s^*$ as the action profile resulting in $\opt$ total cost. In the definition for mechanisms, we do not insist that $SW(s^*)=\opt$.

Recall from section \ref{sec:model} that 
first price item auctions are $(1-1/e,1)$-smooth and all-pay actions are $(1/2,1)$-smooth. We  show in Proposition \ref{prop:poa-u_full_info} that smooth mechanisms have a price of anarchy of at most $\max(\mu,1)/\lambda$.

\begin{definition}[Low Approximate Regret for utility maximization]
A learning algorithm for player $i$ that uses action distributions $w_i^t$ in step $t$ satisfies the
Low Approximate Regret property for a parameter $\epsilon$, and a function $A(d,T)$ if for all action distributions $f\in\Delta(S_i)$:
\begin{equation}
\label{eq:lar-u}
(1+\epsilon)\sum_{t=1}^T \tri{w_i^t,u_i^t} \geq  \sum_{t=1}^T \tri{f,u_i^t}-\frac{A(d,T)}{\epsilon}.
\end{equation}

An algorithm satisfies Low Approximate Regret for the shifting experts setting if for all sequences  $f^1,\dots, f^T\in\Delta(S_i)$, letting $K$ be the number of shifts, i.e. $K=|\{t>2: f^{t-1}\neq f^t \}|$:
\begin{equation}
\label{eq:lar-u_shifting}
(1+\epsilon)\sum_{t=1}^T \tri{w_i^t,u_i^t} \geq  \sum_{t=1}^T \tri{f^t,u_i^t}-(1+K)\frac{A(d,T)}{\epsilon}.
\end{equation}
We say that an algorithm satisfies the \textit{Strong Low Approximate Regret property} if it satisfies \eqref{eq:lar-u} or \eqref{eq:lar-u_shifting} for all $\epsilon>0$ simultaneously. In the bandit feedback case, we require the property to hold in expectations over the realized strategies of player $i$.
\end{definition}

Now we are ready to prove the utility maximization analog of Proposition \ref{prop:poa_full_info}

\begin{proposition}[Efficiency for Mechanisms]
\label{prop:poa-u_full_info}
Consider a $(\lambda,\mu)$-smooth mechanism.  If all players use Low Approximate Regret algorithms satisfying Eq. (\ref{eq:lar-u}) for parameter $\epsilon$, then
$$
\frac{1}{T}\sum_t\En\brk*{SW(s^t)}\geq \frac{\lambda}{\max(\mu,1+\epsilon)}\opt+\frac{n}{T}\cdot\frac{1}{\max(\mu,1+\epsilon)}\cdot \frac{A(d,T)}{\epsilon}.
$$
where $s^t$ is the action profile drawn on round $t$ from the corresponding mixed actions of the players.
\end{proposition}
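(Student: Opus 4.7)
The argument parallels the price-of-anarchy derivation of Proposition~\ref{prop:poa_full_info}, modified to accommodate the payment term in mechanism smoothness. Let $s^{\star}$ be the strategy profile guaranteed by $(\lambda,\mu)$-smoothness. The first step is to apply the utility Low Approximate Regret property \eqref{eq:lar-u} for each player $i$ with the point-mass comparator $f = e_{s_i^\star}$, sum over players, and take expectations over the players' sampling. Using $\En\brk*{\tri{w_i^t, u_i^t}} = \En\brk*{\utility_i(s^t)}$ and the analogous identity for the comparator, this yields
\begin{equation*}
(1+\epsilon)\sum_t \En\brk*{\sum\nolimits_i \utility_i(s^t)} \;\geq\; \sum_t \En\brk*{\sum\nolimits_i \utility_i(s_i^\star, s_{-i}^t)} - \frac{n\,A(d,T)}{\epsilon}.
\end{equation*}

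Next, I invoke the smoothness bound $\sum_i \utility_i(s_i^\star, s_{-i}^t) \geq \lambda \opt - \mu \sum_i p_i(s^t)$ at each round $t$, and convert utility sums to welfare via the identity $\sum_i \utility_i(s^t) = SW(s^t) - \sum_i p_i(s^t)$. Writing $S = \sum_t \En\brk*{SW(s^t)}$ and $P = \sum_t \En\brk*{\sum_i p_i(s^t)}$, and noting that $0 \leq P \leq S$ since utilities are nonnegative, the chain of inequalities collapses to
\begin{equation*}
(1+\epsilon)S \;\geq\; T\lambda\opt + (1+\epsilon-\mu)P - \frac{n\,A(d,T)}{\epsilon}.
\end{equation*}

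To conclude I split on the sign of $(1+\epsilon-\mu)$. When $\mu \leq 1+\epsilon$ the coefficient of $P$ is nonnegative and can simply be dropped. When $\mu > 1+\epsilon$ the coefficient is negative and I use $P \leq S$ to absorb the extra $(\mu{-}1{-}\epsilon)S$ term back into the left side. Either way one obtains $\max(\mu,1+\epsilon)\cdot S \geq T\lambda\opt - n\,A(d,T)/\epsilon$; dividing through by $T\max(\mu,1+\epsilon)$ gives the target bound (with the additive error term entering with a minus sign, as the bound would otherwise be vacuous for small $T$ given $SW(s^t)\leq n$).

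The main obstacle is the asymmetric role played by payments: mechanism smoothness lower-bounds a sum of \emph{utilities}, not welfare, against $\opt$, so the payment sum $P$ enters the chain on both sides with opposite signs and must be tracked carefully via $SW = \sum_i\utility_i + \sum_i p_i$. The $\max(\mu,1+\epsilon)$ in the denominator is precisely the artifact of the two regimes of the case split: either the regret-slack factor $1+\epsilon$ dominates $\mu$ (payments become a bonus and can be discarded), or $\mu$ dominates (payments must be reabsorbed into welfare through $P \leq S$).
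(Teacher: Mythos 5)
Your proof is correct and follows the same route as the paper's (which only sketches the argument: apply \eqref{eq:lar-u} with $f=s_i^{\star}$ for each player, invoke smoothness at each round, and rearrange); your explicit tracking of the payment sum $P$ via $SW = \sum_i \utility_i + \sum_i p_i$, the use of $0\leq P\leq S$ from nonnegativity of utilities, and the case split producing $\max(\mu,1+\epsilon)$ supply exactly the details the paper omits. You are also right that the additive $\frac{n}{T}\cdot\frac{1}{\max(\mu,1+\epsilon)}\cdot\frac{A(d,T)}{\epsilon}$ term must enter with a minus sign; the plus sign in the stated proposition is a typo.
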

\begin{proof}
 We get the claimed bound  by considering
$(1+\epsilon)\sum_t\En\brk*{\sum_i \utility_i(s^t)}$, using the low approximate regret property with $f=s^{\star}_i$ for each player $i$ for the action $s^{\star}$in the smoothness property, then using the smoothness property for each time $t$ to bound  $\sum_i \utility_i(s_i^{\star},s_{-i}^t)$, and rearranging terms.
\end{proof}

\begin{proposition}
\textit{Hedge with a constant learning rate and uniform prior over actions satisfies the utility version of the Low Approximate Regret property with $A(d,T)=(e-1)\log(d)$.}
\end{proposition}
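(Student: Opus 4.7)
The plan is to mirror the proof of Proposition \ref{prop:hedge} in Appendix \ref{app:proof_hedge_static}, substituting utilities for costs throughout. The Hedge update for utility maximization is $w^{t+1}_i \propto w^t_i e^{\eta u^t_i}$, which is exactly Online Mirror Descent with the negative-entropy regularizer applied to the pseudo-cost $c^t := -u^t$. So by invoking the standard per-step Mirror Descent identity (Lemma \ref{lem:property_of_projection}) and flipping signs, I would obtain for every $f \in \Delta(d)$:
\begin{equation*}
\tri{f - w^t, u^t} \leq \tri{\widetilde{w}^{t+1} - w^t, u^t} + \frac{1}{\eta}\prn*{D_R(f\mid w^t) - D_R(f\mid w^{t+1})}.
\end{equation*}

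The key step, where the utility proof diverges from the cost proof, is bounding the linearization error $\tri{\widetilde{w}^{t+1} - w^t, u^t}$. In the cost version (equation \eqref{eq:hedge_loss_place}) the inequality $1 - e^{-\eta c} \leq \eta c$ is used, giving the clean factor of $\eta$. Here, since $\widetilde{w}^{t+1}_j - w^t_j = w^t_j(e^{\eta u^t_j} - 1)$ and the sign is reversed, we instead need an upper bound on $e^{\eta u} - 1$ for $u \in [0,1]$ and $\eta \in (0,1]$. I would use the two-step bound: first, convexity of $u \mapsto e^{\eta u}$ gives $e^{\eta u} - 1 \leq (e^\eta - 1)u$ on $[0,1]$ (the chord bound); second, $e^\eta - 1 \leq (e-1)\eta$ on $[0,1]$ (another chord bound, applied to $\eta \mapsto e^\eta - 1$). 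Composing these yields $e^{\eta u} - 1 \leq (e-1)\eta u$, and therefore
\begin{equation*}
\tri{\widetilde{w}^{t+1} - w^t, u^t} = \sum_{j\in[d]} w^t_j(e^{\eta u^t_j}-1) u^t_j \leq (e-1)\eta \sum_{j\in[d]} w^t_j (u^t_j)^2 \leq (e-1)\eta \tri{w^t, u^t},
\end{equation*}
where the last step uses $u^t_j \in [0,1]$. This is precisely where the constant $(e-1)$ in $A(d,T)$ comes from, and it is the main (indeed essentially the only) obstacle; the rest of the proof is mechanical.

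After this step I would sum over $t$, bound $D_R(f \mid w^1) \leq \log d$ using the uniform prior (just as in Proposition \ref{prop:hedge}), and drop the nonnegative $D_R(f\mid w^{T+1})$ term, arriving at
\begin{equation*}
\sum_t \tri{f, u^t} \leq (1 + (e-1)\eta)\sum_t \tri{w^t, u^t} + \frac{\log d}{\eta}.
\end{equation*}
Setting $\eta = \epsilon/(e-1)$ (requiring $\epsilon \leq e-1$ so that $\eta \leq 1$, which is the relevant regime) and rearranging delivers
\begin{equation*}
(1+\epsilon)\sum_t \tri{w^t, u^t} \geq \sum_t \tri{f, u^t} - \frac{(e-1)\log d}{\epsilon},
\end{equation*}
which is exactly the utility Low Approximate Regret property \eqref{eq:lar-u} with $A(d,T) = (e-1)\log d$, as claimed.
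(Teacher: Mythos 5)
Your proposal is correct and follows essentially the same route as the paper: reduce to the cost proof via $c^t=-u^t$, replace the linearization bound with $e^{\eta u}-1\leq(e-1)\eta u$ (the paper states this as $1-e^{-\eta x}\leq(e-1)\eta x$ for $x\in[-1,0]$, equivalent to your two chord bounds), and set $\epsilon=\eta(e-1)$. The explicit remark that $\eta\leq 1$ (i.e.\ $\epsilon\leq e-1$) is needed for the chord bound is a small point the paper leaves implicit.
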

We mirror the proof of Proposition \ref{prop:hedge} with $c^t=-u^t$. The only place where the analysis does not automatically go through is where we need that the costs are in $[0,1]$, namely equation \eqref{eq:hedge_loss_place}. Note that the first inequality there ceases to hold when $c^t<0$. However it is still the case that $1-e^{-\eta x}\leq (e-1)\eta x$ for $x\in[-1,0]$. Hence we have:
\begin{equation*}
    \tri{w^t-\widetilde{w}^{t+1},c^t}\leq \eta (e-1)\sum_{j\in[d]} w_j^t (c_j^t)^2\leq \eta (e-1)\sum_{j\in[d]} w_j^t (-c_j^t).
\end{equation*}
The last inequality holds as $-c_j^t\in[0,1]$.

With this inequality, combined with the rest of the proof in Proposition \ref{prop:hedge}, we have:
$\sum_t\tri{w_t-f,c^t}\leq \eta (e-1)\sum_t\tri{w^t,-c^t}+\frac{\log(d)}{\eta}$. Setting $\epsilon=\eta(e-1)$ and substituting $c^t$ yields
\begin{equation*}
(1+\epsilon)\sum_{t=1}^{T}\tri{w^t,u^t}\geq \sum_{t=1}^{T}\tri{f,u^t}-\frac{(e-1)\log(d)}{\epsilon}.
\end{equation*}
which proves the claim.

\paragraph{Bandit Feedback}
We now provide some more discussion regarding Algorithm \ref{alg:bandit_alg}, since the improvement on the number of strategies occurs in utlility maximization settings.

The algorithm's update step for utilities is obtained by using $c^t=-u^t$, but note that the normalization factor is $\gamma\geq{}0$ for utility settings.

The Low Approximate Regret proof is achieved as in Lemma \ref{lem:bandit_static} again by replacing cost with negative utility.

\begin{lemma}[Regret bound for Algorithm \ref{alg:bandit_alg} with utilities]
\label{lem:bandit_final_utilities}
For any $f^{\star}\in\Delta(d)$, and any sequence of utilities $u^{1},\ldots,u^{T}\in\brk*{0,1}^{d}$, the weights generated by Algorithm \ref{alg:bandit_alg} with $\eta\in\prn*{0,1}$ satisfy
\
\begin{equation}\label{eq:bandit_final_utilities}
\En\brk*{\sum_{t=1}^T \tri{w^t,u^t}} \geq \En\brk*{\sum_{t=1}^T \tri{f^{\star},u^t}} -\frac{\eta}{1 -\eta} \En\brk*{\sum_{t=1}^T \tri{w^t,u^t}}
    -\frac{1}{\eta}d\log(T/d) - d.
\end{equation}
\end{lemma}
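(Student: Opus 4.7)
The plan is to mirror the proof of Lemma \ref{lem:bandit_final} term-by-term with the substitution $c^t \leftarrow -u^t$, verifying that each step still goes through with the correct sign conventions. First I would note that Algorithm \ref{alg:bandit_alg}, when used with utilities, corresponds to Online Mirror Descent with the log barrier regularizer $R$ using the importance-weighted utility estimator $\hat u^t_j = \ind\crl*{s^t = j} u^t_j / w^t_j$, which is unbiased in the sense that $\En\brk*{\hat u^t \mid s^1,\ldots,s^{t-1}} = u^t$. The normalization scalar $\gamma$ is now non-negative rather than non-positive, but the algebraic identity underlying Lemma \ref{lem:bandit_full_info} --- that the intermediate point satisfies $\wt^{t+1}_j = w^t_j / (1 - \eta w^t_j \hat u^t_j)$ --- is obtained directly from Proposition \ref{prop:log_barrier} with $\hat c^t = -\hat u^t$.

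Next, I would apply the standard mirror descent per-step inequality with comparator $f^{\star}$ to get
\begin{equation*}
\tri{f^{\star} - w^t, \hat u^t} \leq \eta \sum_j \frac{(w^t_j)^2 (\hat u^t_j)^2}{1 - \eta w^t_j \hat u^t_j} + \frac{1}{\eta}\prn*{D_R(f^{\star}\mid w^t) - D_R(f^{\star}\mid w^{t+1})}.
\end{equation*}
The crucial observation is that $w^t_j \hat u^t_j = \ind\crl*{s^t = j} u^t_j \in [0,1]$ because $u^t_j \in [0,1]$, so $1 - \eta w^t_j \hat u^t_j \geq 1 - \eta > 0$ for $\eta \in (0,1)$ and the update is well-defined. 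Using $(w^t_j \hat u^t_j)^2 = \ind\crl*{s^t=j}(u^t_j)^2 \leq \ind\crl*{s^t=j} u^t_j$, taking conditional expectation to convert $\ind\crl*{s^t=j} u^t_j$ into $w^t_j u^t_j$, and then summing over $t$ yields
\begin{equation*}
\En\brk*{\sum_t \tri{f^{\star} - w^t, u^t}} \leq \frac{\eta}{1-\eta}\En\brk*{\sum_t \tri{w^t, u^t}} + \frac{1}{\eta} D_R(f^{\star} \mid w^1).
\end{equation*}

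Finally, since $D_R(f^{\star} \mid w^1)$ is unbounded for the log barrier regularizer when $f^{\star}$ has a small component, I would apply the same smoothing trick as in Lemma \ref{lem:bandit_final}: apply the bound above to $\bar f = (1-\theta) f^{\star} + \theta \pi$ with $\pi$ uniform, use $D_R(\bar f \mid w^1) \leq d\log(1/\theta)$ from Proposition \ref{prop:log_barrier}, and expand $\tri{\bar f, u^t} = (1-\theta)\tri{f^{\star}, u^t} + \theta \tri{\pi, u^t}$. The additive error introduced by the mixing is at most $\theta T$ since $u^t \in [0,1]^d$, and choosing $\theta = d/T$ balances the two terms to yield the claimed bound. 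There is no substantial obstacle beyond sign-tracking: the proof is essentially symmetric to the cost case because the log barrier gives control of the stability term on both sides of the descent direction, and the utility range $u^t \in [0,1]$ plays exactly the same role as the cost range $c^t \in [0,1]$ did in bounding both $(w^t_j \hat u^t_j)^2$ and the denominator $1 - \eta w^t_j \hat u^t_j$.
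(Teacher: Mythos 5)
Your proposal is correct and follows exactly the route the paper takes: substitute $c^t=-u^t$, invoke the per-step bound of Lemma \ref{lem:bandit_full_info}, and repeat the analysis of Lemma \ref{lem:bandit_final} (the paper states this reduction in one line, while you verify the sign-sensitive steps explicitly, in particular that $1-\eta w^t_j\hat u^t_j\geq 1-\eta>0$ and that the smoothing with $\bar f=(1-\theta)f^{\star}+\theta\pi$ costs at most $\theta T$). No gaps.
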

\begin{proof}[Proof of Lemma \ref{lem:bandit_final_utilities}]
Define a cost sequence $c^1,\ldots,c^T$ via $c^t=-u^t$ and run Algorithm \ref{alg:bandit_alg} with these costs. From Lemma \ref{lem:bandit_full_info}, we have that for each $t$,
\[
-\tri{w^t-f^{\star},\hat{u}^t}  \leq  \eta\sum_{j\in\brk{d}} \frac{(w_j^t \cdot \hat{u}_j^t)^2}{1+\eta w_j^t \hat{u}_j^t}
    +\frac{1}{\eta}\prn*{D_R\prn*{f^{\star}\mid{}w^t}-D_R\prn*{f\mid{}w^{t+1}}},
\]
where $\hat{u}^t_j=\ind\crl*{j=s^t}u^t_j/w^t_j$.
Applying an analysis identical to that of Lemma \ref{lem:bandit_final} on this bound yields the result.
\end{proof}

\end{document}